\title{Quantization of classical integrable systems \\
Part IV: systems of resonant oscillators}
\author{M. Marino and N. N. Nekhoroshev \\
{\small Dipartimento di Matematica, Universit\`{a} degli Studi di Milano,} \\
{\small via Saldini 50, I-20133 Milano (Italy)}}
\newtheorem{thm}{Theorem}%[section]
\newtheorem{prop}[thm]{Proposition}
\theoremstyle{definition}
\newtheorem{defn}{Definition}[section]
\theoremstyle{remark}
\def\beq{\begin{equation}}
\def\eeq{\end{equation}}
\def\im{{\rm Im}\,}
\def\re{{\rm Re}\,}
\def\aop{{\cal A}}
\def\bop{{\cal B}}
\def\rn{\mathbb{R}}
\def\nn{\mathbb{N}}
\def\zn{\mathbb{Z}}
\def\cn{\mathbb{C}}
\def\ipn2{\left[\frac n2\right]}
\numberwithin{equation}{section} \numberwithin{thm}{section}
\begin{document}

\maketitle

\begin{abstract}
By applying methods already discussed in a previous series of
papers by the same authors, we construct here classes of
integrable quantum systems which correspond to $n$ fully resonant
oscillators with nonlinear couplings. The same methods are also
applied to a series of nontrivial integral sets of functions,
which can be constructed when additional symmetries are present
due to the equality of some of the frequencies. Besides, for $n=3$
and resonance 1:1:2, an exceptional integrable system is obtained,
in which integrability is not explicitly connected with this type
of symmetry. In this exceptional case, quantum integrability can
be realized by means of a modification of the symmetrization
procedure.
\end{abstract}

%\section{Systems of resonant oscillators} \label{thirdp}
\section{Introduction}

In \cite{part3} we have already given examples of applications to
concrete systems of a general procedure \cite{part2}, with which a
classical integrable system can generally be transformed into a
quasi-integrable quantum system \cite{part1} (see also references
therein). In this paper we consider systems describing an
arbitrary number $n$ of oscillators with a fully resonant set of
frequencies. More exactly, in the classical case we describe a
class of integrable sets of functions $F$ such that $F_1= l_1I_1+
\dots+ l_n I_n$, where $I_i= \frac 12 (p_i^2 +x_i^2)$ for $i=1,
\dots, n$, and $l_1, \dots, l_n$ are nonzero integer values, i.e.,
$l\in (\zn \setminus \{0\})^n$ (one can assume that they have no
common divisors other than 1). Hence the set $F$ contains the
hamiltonian $H=F_1$ of a set of fully resonant linear oscillators.
We are able to construct a whole class of integrable sets of
functions by exploiting the condition of resonance. Furthermore,
when some of the frequencies are equal to each other, it is
possible to construct a wider class of nontrivial integrable
systems, by exploiting the symmetry provided by these equalities.

These integrable sets contain in general $2n-k$ elements, where
$k$ is equal to the number of elements in the central subset
\cite{TMMO}. This number, for the various integral sets here
considered, can take all possible values from 1 to $n$. Each
integrable set can be applied to all systems whose hamiltonian is
an arbitrary function of the central elements, and so to a whole
class of systems which describe nonlinear oscillations with
completely resonant frequencies of small oscillations. It means
that if the amplitude of oscillations tends to zero, then in this
limit the oscillations tend to the linear ones with a completely
resonant set of frequencies. The situation $n=k=3$,
$|h_1|=|h_2|\neq 0$, is studied in particular detail. For the
special case $|h_1|=|h_2|=1$, $|h_3|=2$, we are able to construct
an exceptional integrable system which is not explicitly based on
the symmetry connected with the two equal frequencies.

Using the results of
%section \ref{secondp},
\cite{part2}, we then construct the quantum analogues of all these
classical integrable systems. For the exceptional system, a
modification of the symmetrization procedure is required in order
to obtain a quasi-integrable quantum system. This modification
consist in the addition of a lower order term to the direct
symmetrization of one of the classical functions, and thus is of
the same type as the modification that was required in
\cite{part3} for the integration of the free quantum rigid body in
6-dimensional space.

The investigations of the integrable systems considered in
\cite{part3} and in the present paper illustrate in concrete
situations the general concepts presented in \cite{part1} and
\cite{part2}. Note however that the systems considered in
\cite{part3}, describing the motion of particles in a central
force field and the free rotation of a rigid body, have
application in physics only when the dimension $n$ of
configuration space is equal to 3. On the contrary, the systems of
oscillators which are considered here are interesting for
applications for any not too large value of $n$. For example,
spectral lattices which are constructed from quantum integrable
systems are useful for studying data of spectrography in physics
and molecular chemistry. The points of such lattices are vectors
$\lambda=(\lambda_1, \ldots, \lambda_k)$, such that each vector is
made of eigenvalues $\lambda_i$, $i=1,\ldots,k$, of the central
operators ${\cal F}_1,\ldots,{\cal F}_k$ of an integrable set
${\cal F}=({\cal F}_1,\ldots,{\cal F}_k;{\cal F}_{k+1}, \ldots,
{\cal F}_{2n-k})$ of linear differential operators, and these
eigenvalues correspond to common eigenfunctions $\psi$ of these
operators: ${\cal F}_i\psi=\lambda_i\psi$, $i=1,\ldots,k$. More
details can be found for example in \cite{NSZhilin,NSZ,drobrez}.
Sets of nonlinear fully resonant oscillators also play a central
role in the study of important infinite-dimensional systems
\cite{bamb, Trud}.

\section{Classical oscillators with completely resonant set of
frequencies} \label{closc}

It is easy to construct a linear basis of the infinite-dimensional
linear space of all real polynomial functions in $(x,p)$ which are
in involution with $F_1= l_1 I_1+ \dots+ l_n I_n$. To this
purpose, it is useful to introduce the functions
\begin{equation}\label{zxp}
z_j =\frac{x_j +ip_j}{\sqrt 2}\,, \qquad \bar z_j= \frac{x_j
-ip_j}{\sqrt 2}\,,
\end{equation}
so that $\bar z_j$ is the complex conjugate of $z_j$. A linear
basis in the space of all polynomials in $(x,p)$ is obviously
given by the real and imaginary parts of all monomials in $(z,
\bar z)$, i.e., by the functions of the form $\re P_{a,b}$ and
$\im P_{a,b}$, where $P_{a,b}:= z_1^{a_1} \bar z_1^{b_1} z_2^{a_2}
\bar z_2^{b_2} \cdots z_n^{a_n} \bar z_n^{b_n}$ and $a=(a_1, a_2,
\ldots, a_n) \in \zn_+^{n}$, $(b_1, b_2, \ldots, b_n) \in
\zn_+^{n}$. We have
\begin{equation}
\{z_i,z_j\}=\{\bar z_i,\bar z_j\}= 0\,, \qquad \{z_i,\bar z_j\}=
i\delta_{ij} \label{fpb}
\end{equation}
for $i,j=1,\ldots,n$, and $I_i = z_i\bar z_i$. It follows that
\begin{equation} \label{f1pab}
\{F_1, P_{a,b}\} = -i l\cdot (a-b)\, P_{a,b}\,,
\end{equation}
where $l\cdot (a-b)=\sum_{j=1}^n l_j (a_j-b_j)$. We have
$(P_{a,b})^*=P_{b,a}$, where $^*$ denotes complex conjugation.
From (\ref{f1pab}) one thus obtains
\begin{equation}\label{f1rpab}\begin{split}
\{F_1, \re P_{a,b}\}&= l\cdot(b-a) \im P_{a,b}\,, \\
\{F_1, \im P_{a,b}\}&= -l\cdot(b-a) \re P_{a,b} \,. \end{split}
\end{equation}
We see therefore that a basis of the linear space of all
polynomials in involution with $F_1$ is given by the set ${\cal
P}_l:= \big(\re P_{a,b}, \im P_{a,b},\ l\cdot(b-a)=0\big)$.

\subsection{General symmetries}\label{gs} It is easy to find
$2n-1$ functionally independent elements of ${\cal P}_l$, for any
set of frequencies $l=(l_1, \dots, l_n)\in (\zn \setminus
\{0\})^n$. Hence, the system with hamiltonian function $F_1$ is
integrable with number of central integrals $k=1$. For example, if
$l_i \geq 1$ for $i=1, \dots, n$, then it is easy to check that
the function $F_1$ is involution with the functions of the set
$F=(F_1, I_2, I_3, \dots, I_n, R_{12}, R_{13}, \dots, R_{1n})$,
where $R_{ij}:= {\rm Im} \big( z_i^{l_j} \bar z_j^{l_i}\big)$ for
$1\leq i<j\leq n$. If $l_1$ and $l_j$ have a common divisor $d$,
then one can divide both numbers by $d$ and consider instead of
$R_{ij}$ the polynomial $R'_{ij}:= {\rm Im}\big(z_i^{l'_j}\bar
z_j^{l'_i}\big)$, where $l'_i=l_i/d$, $l'_j=l_j/d$. Moreover, it
is easy to see that the functions of the set $F$ are functionally
independent, so that this set is integrable with one central
integral $F_1$. An integrable set $F$ of similar form can also be
constructed when the hamiltonian function $F_1$ has frequencies
$l_1, \dots, l_n$ of different signs. In this case, one takes
\begin{equation} \label{rij}
R_{ij}:= \begin{cases} {\rm Im} \big(z_i^{|l_j|} \bar
z_j^{|l_i|}\big)\qquad &\text{if }l_i l_j>0\,, \\
{\rm Im}\big(z_i^{|l_j|}z_j^{|l_i|}\big)\qquad &\text{if } l_i
l_j<0\,.\end{cases}
\end{equation}

It is also possible to find other integrable sets of functions,
with number $k$ of central integrals ranging from 2 up to the
maximum possible value $n$. As we already explained in
%section \ref{ccff}
\cite{part3} about a similar situation for one-particle systems
with $SO(n)$ symmetry (see there proposition %\ref{bamb}
2.4 and related comments), the existence of such sets with higher
$k$ allows one to construct a wider class of integrable systems,
by taking as hamiltonian any arbitrary function of the central
elements of an integrable set. In the present case, functions of
this type can often be represented as polynomials in $(x,p)$ of
degree $>2$. In such cases, they can be considered as
perturbations to the hamiltonian $F_1$ of resonant oscillators,
which become negligible in the limit of small amplitude of
oscillations.

We are now going to describe a general procedure for the
construction of a remarkable class of integral sets of functions
of various $k$. For any $m\in \zn^n$, consider the function
\begin{equation} \label{rm}
R_m= \zeta_1^{m_1} \zeta_2^{m_2}\cdots \zeta_n^{m_n} \,,
\end{equation}
where
\begin{equation} \label{zeta}
\zeta_j^{m_j}:= \begin{cases} z_j^{m_j} &\text{if }\ m_j\geq 0\,,  \\
\bar z_j^{-m_j} &\text{if }\ m_j< 0\,. \end{cases}
\end{equation}
Note that $(\zeta_j^{m_j})^* = \zeta_j^{-m_j}$, so that
$R_m^*=R_{-m}$. For $r\in \zn^n$, let us define
\begin{equation} \label{j}
J_r= \sum_{j=1}^n r_j I_j\,,
\end{equation}
so that $F_1= J_l$. Similarly to (\ref{f1rpab}) we have
\begin{equation} \label{jr}
\{J_r, \re R_m\}= r\cdot m\, \im R_m \,,\qquad \{J_r, \im R_m\}=
-r\cdot m \,\re R_m \,.
\end{equation}
% Let $O_l\subset \zn^n$ denote the set of all vectors with integer
% components, which are orthogonal to $l$: $O_l=\{m\in \zn^n: l\cdot
% m=0\}$.
Consider the sets of functions ${\cal A}_l= (\re R_m,\ m\cdot
l=0)$, ${\cal B}_l= (\im R_m, \ m\cdot l=0)$, and $I=(I_1, I_2,
\dots, I_n)$. It is easy to see that ${\cal U}_l:= (I, {\cal A}_l,
{\cal B}_l)$ is a set of functions in involution with $F_1$, which
is closed with respect to Poisson brackets. This means that the
Poisson bracket of two elements of ${\cal U}_l$ can always be
expressed as a function of other elements of ${\cal U}_l$. Note
also that
\begin{equation}\label{imre}
(\re R_m)^2+ (\im R_m)^2= |R_m|^2= I_1^{|m_1|} I_2^{|m_2|} \cdots
I_n^{|m_n|}\,.
\end{equation}

Put
\begin{equation}\label{f1n}
F_i=J_{r^{(i)}}\quad \text{for }\ i=1, \dots,n\,,
\end{equation}
where $r^{(1)}, \dots, r^{(n)}$ are $n$ linearly independent
elements of $\zn^n$, with $r^{(1)}=l$. Obviously $F^{(n)}= (F_1,
\dots, F_n)$ is an integrable set with $k=n$. It is however
possible to obtain other integrable sets with $k$ central
elements, where $k$ is any number such that $1\leq k<n$, by adding
to $F^{(n)}$ suitable (functions of) elements of ${\cal U}_l$. A
simple possible choice is to put
\begin{equation}\label{f2n-k}
F_{n+i}= \im R_{m^{(i)}}\quad \text{for }\ i=1, \dots,n-k\,,
\end{equation}
where $m^{(1)}, \dots, m^{(n-k)}$ are $n-k$ linearly independent
elements of $\zn^n$, such that $r^{(i)}\cdot m^{(j)}=0\
\forall\,i=1, \dots, k$ and $\forall\,j=1, \dots, n-k$. Using the
second of (\ref{jr}) it is immediate to see that $\{F_i,F_j\}=0$
$\forall\, i=1, \dots, k$ and $\forall\, j=1, \dots, 2n-k$.
Furthermore, it is not difficult to show that the set
$F^{(k)}=(F_1, \dots, F_{2n-k})$ is functionally independent. It
follows that $F^{(k)}$ is an integrable set with $k$ central
elements $(F_1, \dots, F_k)=(J_{r^{(i)}}, \dots, J_{r^{(k)}})$.

It is also possible to construct integrable sets of functions
whose central subset contains elements of $\aop_l$ or $\bop_l$.
For example, one can obtain a set with $k=n$ by taking $F_i=
J_{r^{(i)}}$ for $i=1, \dots, n-1$, $F_n= \im R_m \in \bop_l$,
where $r^{(1)}=l$ and $m\cdot r^{(i)}=0 \ \forall\, i=1, \dots,
n-1$. More generally, let us consider the sets of functions ${\cal
R}_1:= (\im R_{m^{(1)}}, \dots, \im R_{m^{(k')}})$ and ${\cal
R}_2:= (R_{m^{(k'+1)}}, \dots, R_{m^{(h)}})$, where $m^{(1)},
\dots, m^{(k')}, m^{(k'+1)}, \dots, m^{(h)}\in \zn^{n}$ are
linearly independent vectors such that $m^{(i)}\cdot l=0$ and
$m^{(i)}_p m^{(j)}_p =0\ \forall\, i=1, \dots, h$ and $j=1, \dots,
k'$, and $\forall\, p=1, \dots, n$. The latter condition obviously
ensures that $\{\im R_{m^{(i)}}, \im R_{m^{(j)}}\}=0$. For
example, for $n\geq 6$ oscillators of equal frequencies, i.e.,
$l_i=1\ \forall\,i=1, \dots, n$, a pair of sets of such type, with
$k'=3$ and $h=n-4$, is given by ${\cal R}_1=(R_{12}, R_{34},
R_{56})$, ${\cal R}_2=(R_{78}, R_{79}, \dots, R_{7n})$, where the
functions $R_{ij}$ are defined by formula (\ref{rij}). Clearly one
must have in general $k' \leq n/2$. Obviously the case $h=k'$
corresponds to ${\cal R}_2= \emptyset$. It is also easy to see
that, when $h>k'$, then one has necessarily $h < n-k'$.
Furthermore, let us take $n-k'$ linearly independent vectors
$r^{(1)}, \dots, r^{(n-k')}\in \zn^{n}$, with $r^{(1)}=l$,
$r^{(i)} \cdot m^{(j)}=0\ \forall \,i=1, \dots, n-k'$ and $j=1,
\dots, k'$, in such a way that the first $n-h$ vectors of this set
satisfy the additional relations $r^{(i)} \cdot m^{(j)}=0\ \forall
\,i=1, \dots, n-h$ and $j=k'+1, \dots, h$. It is then easy to see
that the set
\begin{equation}\label{r1r2}
(F_1, \dots, F_{2n-k})= ({\cal J}_1, {\cal R}_1; {\cal J}_2, {\cal
R}_2)\,,
\end{equation}
where $k=n-h+k'$, ${\cal J}_1=(J_{r^{(1)}}, \dots, J_{r^{(n-
h)}})$, ${\cal J}_2=(J_{r^{(n-h+1)}}, \dots, J_{r^{(n-k')}})$, is
an integrable set with $k$ central elements $(F_1, \dots, F_{k})=
({\cal J}_1, {\cal R}_1)$.

\subsection{Additional symmetries for equal frequencies}
\label{eqfr} It often occurs in physics that all frequencies, or
part of them, are equal to one another or have the same absolute
value. In these cases there exist also other types of integrable
sets. For example, let us suppose that $l_i= l_j\ \forall \,i,j=1,
\dots, n$. In this case the system with hamiltonian $F_1$ is
invariant with respect to the action of the group $SO(n)$ on
configuration space,
%so that we are in the situation described in section \ref{ccff},
and can be identified with the system describing a point particle
moving in the central potential $U(r)=r^2/2$ in $n$-dimensional
space. It is then possible to construct additional integrable sets
of functions by making use of proposition %\ref{bamb}.
2.4 of \cite{part3}. In the present case, the procedure can also
be generalized to the case in which only the absolute values of
the frequencies are equal to one another. More precisely, let us
suppose that $l_i= \epsilon_i$ for $i=1, \dots, n$, where
$\epsilon_i= \pm 1\ \forall\, i=1, \dots, n$. Consider the
functions
\begin{equation} \label{wi}
w_i:= \begin{cases} z_i &\text{if }\ \epsilon_i=+1\,,  \\
\bar z_i &\text{if }\ \epsilon_i=-1\,, \end{cases}
\end{equation}
for $i=1, \dots, n$, and let $\bar w_i$ denote the complex
conjugate of $w_i$. We have
\[
\{w_i,w_j\}=\{\bar w_i,\bar w_j\}= 0\,, \qquad \{w_i,\bar w_j\}= i
\epsilon_i\delta_{ij}\,,
\]
and $I_i= z_i \bar z_i= w_i \bar w_i \ \forall\, i,j=1, \dots, n$.
It follows that $\{I_i, w_j\}= -i \delta_{ij}\epsilon_j w_j$, and
\begin{equation} \label{hw}
\{F_1, w_j\}= -iw_j \,, \qquad \{F_1, \bar w_j\}= i\bar w_j \,,
\end{equation}
where $F_1= \sum_{i=1}^n \epsilon_i I_i$. Let us consider the
momenta
\begin{equation}\label{pijo}
P_{ij}= i(w_i \bar w_j- w_j \bar w_i)= -P_{ji}\,.
\end{equation}
From (\ref{hw}) it follows that
\begin{equation}\label{fpij}
\{F_1, P_{ij}\}=0 \quad \forall \ i,j=1, \dots, n \,.
\end{equation}
Note that
\[
P_{ij}= \begin{cases} \pm i (z_i \bar z_j- z_j \bar z_i)= \pm(x_i
p_j - x_j p_i) &\text{if }\
\epsilon_i= \epsilon_j= \pm 1\,, \\
\pm i (z_i z_j- \bar z_i \bar z_j)= \mp(x_i p_j + x_j p_i)
&\text{if }\ \epsilon_i= -\epsilon_j= \pm 1\,. \end{cases}
\]
Hence, for $\epsilon_i= \epsilon_j=1$, $P_{ij}$ is formally
identical to the momentum
%(\ref{pij}) introduced in section \ref{ccff}
introduced in \cite{part3} for a particle in a central force field
(although the physical meaning of variables $x$ and $p$ is here
different). We have
\begin{equation} \label{ppoisse}
\{P_{ij}, P_{hk}\} = -\epsilon_i\delta_{ih} P_{jk} - \epsilon_j
\delta_{jk} P_{ih} + \epsilon_i\delta_{ik} P_{jh} + \epsilon_j
\delta_{jh} P_{ik} \,,
\end{equation}
which has to be compared with %(\ref{ppoiss}).
the analogous Poisson bracket in \cite{part3}. It is then easy to
see that
\begin{equation}\label{rotpo}
\{P^2, P_{ij}\}=0 \quad \forall \ i,j=1, \dots, n \,,
\end{equation}
where
\[
P^2:= \sum_{i<j}\epsilon_i \epsilon_j P_{ij}^2 \,.
\]

%Considering the formal analogy between (\ref{rotpo}) and
%(\ref{rot}),
Since (\ref{rotpo}) is formally identical with the analogous
poisson bracket for a particle in a central force field, it is
easy to see that proposition
%\ref{bamb}
2.4 of \cite{part3} can be generalized in the following way.
\begin{prop}\label{bambo}
For any $n\geq 2$, for any sequence $(\epsilon_1, \dots,
\epsilon_n)$, with $\epsilon_i= \pm 1\ \forall\, i=1, \dots, n$,
and for any $z=1, \dots, n-1$, it is possible to construct in a
recursive manner sets $Z_{n,z}$ and $L_{n,z}$ of polynomial
functions of degree $\leq 2$ in the variables $P_n:=(P_{ij}, 1\leq
i<j \leq n)$, with the following properties:
\begin{enumerate}
\item $Z_{n,z}$ contains $z$ elements,

\item $L_{n,z}$ contains $2(n-z-1)$ elements,

\item the set $\Pi_{n,z}:= (Z_{n,z}, L_{n,z})$ is functionally
independent,

\item $\{Z_{n,z}, \Pi_{n,z}\}=0$.
\end{enumerate}
\end{prop}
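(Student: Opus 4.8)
The plan is to reduce the statement to a construction that is formally identical to the one used for a particle in a central force field, exploiting that the Poisson relations (\ref{ppoisse}) and the Casimir identity (\ref{rotpo}) coincide in form with those of the $SO(n)$ angular momenta treated in \cite{part3}, the signs $\epsilon_i=\pm1$ entering only as nonzero multiplicative constants. For $2\le m\le n$ set
\beq
P^2_{(m)}:=\sum_{1\le i<j\le m}\epsilon_i\epsilon_j\,P_{ij}^2\,,
\eeq
so that $P^2_{(n)}=P^2$. The structural fact I would use first is that the generators $\{P_{ij}:1\le i<j\le m\}$ close under Poisson bracket by (\ref{ppoisse}) (a bracket of two generators with indices $\le m$ again carries indices $\le m$), and that $P^2_{(m)}$ is a Casimir of this subalgebra: repeating verbatim the computation that yields (\ref{rotpo}), but restricted to the generators with indices $\le m$, gives $\{P^2_{(m)},P_{ij}\}=0$ for all $i,j\le m$. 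Hence $P^2_{(m)}$ commutes with every polynomial in the $P_{ij}$ with $i,j\le m$, in particular with every $P^2_{(m')}$ and with every $P_{1i}$, $i\le m$.

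I would then take, for $1\le z\le n-2$, the central set to be the chain of the top $z$ nested Casimirs,
\beq
Z_{n,z}:=\big(P^2_{(n-z+1)},\,P^2_{(n-z+2)},\dots,P^2_{(n)}\big)\,,
\eeq
and the peripheral set to be the remaining lower Casimirs together with the first-row momenta,
\beq
L_{n,z}:=\big(P^2_{(3)},\dots,P^2_{(n-z)};\ P_{12},P_{13},\dots,P_{1,n-z+1}\big)\,;
\eeq
in the extremal case $z=n-1$ one replaces the would-be bottom Casimir $P^2_{(2)}=\epsilon_1\epsilon_2 P_{12}^2$ by $P_{12}$ itself and takes $L_{n,n-1}=\emptyset$. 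These are exactly the sets produced by the natural recursion: $P^2_{(n)}$ is central by (\ref{rotpo}), so one adjoins it to $Z$ and applies the statement to the subalgebra on indices $1,\dots,n-1$ with $(n-1,z-1)$ in place of $(n,z)$; the case $z=1$ is in turn generated by a second, analogous recursion which at each dimension demotes the previous top Casimir into $L$, adjoins the new top Casimir $P^2_{(n')}$ to $Z$ and the new momentum $P_{1,n'}$ to $L$, with base $\Pi_{2,1}=(P_{12};\emptyset)$. A direct count gives $z$ elements in $Z_{n,z}$ and $(n-z-2)+(n-z)=2(n-z-1)$ elements in $L_{n,z}$, which are properties 1 and 2. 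Property 4 is then immediate: every central element is either $P_{12}$ or some $P^2_{(m)}$ with $m\ge n-z+1$; the largest index carried by a momentum of $\Pi_{n,z}$ is $n-z+1\le m$, and any Casimir $P^2_{(m')}$ present either has $m'\le m$ (and is a polynomial in generators commuting with $P^2_{(m)}$) or $m'>m$ (so $P^2_{(m)}$ is a polynomial in its generators), so each central element Poisson-commutes with all of $\Pi_{n,z}$.

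The only substantial point, which I expect to be the main obstacle, is property 3: functional independence. I would establish it by a staircase Jacobian argument at a generic point of phase space. The mechanism is triangular: reading the chain upward, $P^2_{(m)}$ is the first of the chosen functions whose gradient involves the row-$m$ momenta $\{P_{im}:i<m\}$, through the new summand $\epsilon_m\sum_{i<m}\epsilon_i P_{im}^2$, while the degree-one functions $P_{12},\dots,P_{1,n-z+1}$ supply $n-z$ further independent directions along the first row. Ordering the functions and selecting for each one a momentum coordinate occurring in it but in none of the earlier ones produces a Jacobian of maximal rank $2n-z-2$ on a dense open set. Since $2n-z-2\le 2n-3$ does not exceed the number $2n-1$ of variables on which the $P_{ij}$ genuinely depend (the $n$ radii and the $n-1$ phase differences), there is no dimensional obstruction, and the whole argument is the exact analogue of the one proving proposition 2.4 of \cite{part3}. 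The care required is precisely to verify that the $\epsilon$-signs cannot force the relevant minors to vanish on a set of full measure; this holds because such cancellations are non-generic, so the rank computation carried out in the all-positive case survives unchanged.
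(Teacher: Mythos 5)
Your construction itself is sound, and it is in substance the same as the paper's: the paper offers no self-contained proof, but simply observes that the relations (\ref{ppoisse}) and (\ref{rotpo}) are formally identical to those satisfied by the angular momenta of a particle in a central force field, so that the recursive construction of proposition 2.4 of \cite{part3} carries over, the signs $\epsilon_i$ being harmless. Your two-level recursion (top nested Casimirs $P^2_{(m)}$ into $Z_{n,z}$, lower Casimirs and first-row momenta $P_{1j}$ into $L_{n,z}$, with base $\Pi_{2,1}=(P_{12};\emptyset)$) is exactly that construction made explicit, and your verifications of properties 1, 2 and 4 are correct.

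The genuine gap is in property 3. Your staircase Jacobian is computed with respect to the $P_{ij}$ treated as independent coordinates, but on phase space they are not independent: writing $w=a+ib$ with $a,b\in\rn^n$, formula (\ref{pijo}) gives $P=2(ab^T-ba^T)$, an antisymmetric matrix of rank at most two. Hence the $n(n-1)/2$ functions $P_{ij}$ have generic functional rank $2n-3$ (not $2n-1$: the number of variables they depend on is the wrong comparison), and they satisfy the Pl\"ucker--Pfaffian identities. Triangularity of $\partial f/\partial P$ in the ambient $P$-space therefore does not imply independence of the pulled-back differentials. Concretely, for $n=4$ and $\epsilon_i\equiv 1$, the set $\big(P_{12},P_{13},P_{14},P_{23},P_{24},\,Q\big)$ with $Q=P_{12}P_{34}-P_{13}P_{24}+P_{14}P_{23}$ passes your staircase test (select $P_{34}$ for $Q$, with $\partial Q/\partial P_{34}=P_{12}\neq 0$), yet $Q\equiv 0$ identically on phase space, so that set is functionally dependent; and the danger is not hypothetical for you, since for $z=1$ your $\Pi_{n,1}$ has exactly $2n-3$ elements, saturating the maximal possible rank. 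The fix is to run the staircase in honest coordinates on the rank-two variety: on the open set $P_{12}\neq 0$ the quantities $(P_{1j})_{2\le j\le n}$ and $(P_{2j})_{3\le j\le n}$ are coordinates there, the remaining entries being determined by $P_{ij}=(P_{1i}P_{2j}-P_{1j}P_{2i})/P_{12}$; substituting this into the Casimirs, one checks that $\partial P^2_{(m)}/\partial P_{2m}$ is not identically zero while $P^2_{(m')}$ for $m'<m$ and the first-row momenta do not involve $P_{2m}$ at all, so the triangular argument becomes valid, and independence on phase space follows because the map $(x,p)\mapsto P$ is generically a submersion onto this variety. With that replacement (or with a rank computation of the full Jacobian in $(x,p)$ at one explicit point), your proof is complete.
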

In this proposition, momenta $P_{ij}$ are obviously defined
according to formula (\ref{pijo}). From (\ref{fpij}) it follows
that $\{F_1, \Pi_{n,z}\}=0$. Hence the sets of functions
$F_{n,z}:= (F_1, Z_{n,z}; L_{n,z})$ are integrable sets, with
subset of central elements $(F_1, Z_{n,z})$. We have $\sharp
F_{n,z}= 2n-k$ and $k=z+1$. Hence the number $k$ of central
elements of these sets can take all values from $k=2$ to $k=n$.

We are now able to describe a general class of integrable sets,
which includes those presented above as particular cases, and
which can be applied to any arbitrary set of frequencies $l\in
(\zn \setminus \{0\})^n$. Let us divide the frequencies into $u$
groups, with $1\leq u\leq n$, so that each group contains one or
more frequencies whose absolute values are equal to one another.
Namely, without loss of generality, we suppose that
\begin{equation}\label{ls}
l_i = \begin{cases}\epsilon_i s_1 & \quad \text{for }\ i=1, \dots, p_1 \,, \\
\epsilon_i s_2 & \quad \text{for }\ i=p_1+ 1, \dots, p_2 \,, \\
\ \ \vdots & \quad\ \,\vdots \\
\epsilon_i s_u & \quad \text{for }\ i=p_{u-1}+1, \dots, n \,,
\end{cases}
\end{equation}
where $s_1, \dots, s_u \in \nn$, $\epsilon_i= \pm1\ \forall\, i=1,
\dots, n$, $1\leq p_1 <p_2 <\dots$ $<p_{u-1}< n$. The number of
frequencies contained in the various groups are $q_h= p_{h}-
p_{h-1} \geq 1$ for $h=1, \dots, u$, where $p_0:=0$, $p_u:=n$.
Obviously $\sum_{h=1}^u q_h=n$. If the absolute values of all
frequencies of the system are pairwise different, one can only
take $u=n$, $q_h=1\ \forall\,h=1, \dots, n$. Note however that, at
variance with the analogous partition of the generalized moments
of inertia $\lambda_1, \dots, \lambda_n$ of a rigid body, given
%by formula (\ref{mu}),
in \cite{part3}, here we do not require that the absolute values
of the frequencies of different groups be pairwise different.
Hence, it is possible in general that $s_h= s_j$ for some $h\neq
j$. This means that, if the system includes any set of more than
one oscillators, whose frequencies have a common absolute value,
one is free to divide them into subgroups in any arbitrary
possible way. Each choice will lead to different realizations of
integrable sets of functions, according to the procedure which we
are going to describe.

Let $w_i$ be defined by formula (\ref{wi}) for $i=1, \dots, n$.
For each $h=1, \dots, u$, let us consider the sets of momenta
\begin{equation}\label{ph}
P_{(h)}=\begin{cases}\emptyset & \quad \text{if }\ q_h=1\,,\\
(P_{ij}, \ p_{h-1}< i<j\leq p_h) & \quad \text{if }\ q_h>1\,,
\end{cases}
\end{equation}
where $P_{ij}$ is defined by formula (\ref{pijo}). Note that
$\{P_{(h)}, P_{(j)}\}=0$ if $h\neq j$. For each $h$ such that
$q_h>1$, let us construct sets of functions $Z_{q_h, z_h}
(P_{(h)})$ and $L_{q_h, z_h} (P_{(h)})$ by means of proposition
\ref{bambo}, where $z_h$ can be arbitrarily chosen among the
possible values $1,2, \dots, q_h-1$. For those $h$ such that
$q_h=1$ we put instead $z_h=0$ and $Z_{q_h, z_h} (P_{(h)})=
L_{q_h, z_h} (P_{(h)})=\emptyset$. We have
\begin{align*}
\sharp Z_{q_h, z_h} (P_{(h)}) &= z_h \,, \\
\sharp L_{q_h, z_h} (P_{(h)}) &= 2(q_h-z_h-1)\,,
\end{align*}
and
\begin{equation}\label{zpi}
\{Z_{q_h, z_h} (P_{(h)}), \Pi_{q_j, z_j} (P_{(j)})\} =0
\end{equation}
$\forall\, h,j=1, \dots, u$, where $\Pi_{q_j, z_j}= (Z_{q_h, z_h},
L_{q_h, z_h})$.

We can write $F_1= \sum_{h=1}^u s_h K_h$, where
\[
K_h:= \sum_{i=p_{h-1}+1}^{p_h} \epsilon_i I_i\,.
\]
In analogy with (\ref{fpij}) we have
\begin{equation}\label{fpijs}
\{K_h, P_{(j)}\}=0\,,
\end{equation}
so that
\begin{equation}\label{kpij}
\{K_h, \Pi_{q_j, z_j} (P_{(j)})\}=0 \quad \forall\, h,j=1, \dots,
u\,.
\end{equation}
This implies, in particular, that $\{F_1, \Pi_{q_j, z_j}
(P_{(j)})\}=0$. Let us introduce the functions
\[
W_h:= \sum_{i=p_{h-1}+1}^{p_h} \epsilon_i w_i^2\,, \qquad h=1,
\dots,u\,.
\]
It is easy to check that
\begin{align}
\{W_h, W_j\} &=0\,, \qquad \{W_h, \bar W_j\} =4i \delta_{jh} K_h
\,, \label{ww} \\
\{W_h, P_{(j)}\} &=0\,, \qquad \{W_h, K_j\} =2i \delta_{jh} W_h
\label{fw2}
\end{align}
$\forall\, j,h=1, \dots, u$, where $\bar W_h$ denotes as usual the
complex conjugate of $W_h$. Note also the relation
\begin{equation} \label{wkp}
W_h \bar W_h= K_h^2- P_{(h)}^2 \,,
\end{equation}
where
\[
P_{(h)}^2:= \sum_{p_{h-1}< i<j\leq p_h}\epsilon_i \epsilon_j
P_{ij}^2 \,.
\]

We are now going to construct integrable sets of functions which
contain the set $\tilde F= \left(F_1, \Pi_{q_1, z_1} (P_{(1)}),
\dots, \Pi_{q_u, z_u} (P_{(u)})\right)$. For the choice of the
additional elements, we shall follow a procedure which is similar
to the one we used before in section \ref{gs}. Here the sets of
functions $(K_1, \dots, K_u)$ and $(W_1, \dots, W_u)$ will play
the role that was formerly played by the sets $(I_1, \dots, I_n)$
and $(z_1, \dots, z_n)$ respectively. We shall in fact replace
formula (\ref{rm}) by
\begin{equation} \label{rm2}
R_m= \Omega_1^{m_1} \Omega_2^{m_2}\cdots \Omega_u^{m_u} \,,
\end{equation}
where $m\in \zn^u$ and
\begin{equation} \label{zeta2}
\Omega_h^{m_h}:= \begin{cases} W_h^{m_h} &\text{if }\ m_h\geq 0\,,  \\
\bar W_h^{-m_h} &\text{if }\ m_h< 0\,. \end{cases}
\end{equation}
Moreover, we shall replace (\ref{j}) by
\begin{equation} \label{j2}
J_r= \sum_{h=1}^u r_h K_h\,,
\end{equation}
where $r\in \zn^u$. Note that $F_1= J_s$. Using the second of
(\ref{fw2}) we easily obtain the Poisson brackets
\begin{equation} \label{jr2}
\{J_r, \re R_m\}= 2r\cdot m\, \im R_m \,,\qquad \{J_r, \im R_m\}=
-2r\cdot m \,\re R_m \,,
\end{equation}
which have to be compared with (\ref{jr}).

Consider the sets of functions ${\cal A}_s= (\re R_m,\ s\cdot
m=0)$, ${\cal B}_s= (\im R_m, \ s\cdot m=0)$, $K=(K_1, \dots,
K_u)$ and $P^2= (P^2_{(1)}, \dots, P^2_{(u)})$. Using relations
(\ref{kpij})--(\ref{wkp}), it is easy to verify that ${\cal U}_s:=
(K, P^2, {\cal A}_s, {\cal B}_s)$ is a set of functions in
involution with $\tilde F$, which is closed with respect to
Poisson brackets. We can obtain an integrable set analogous to the
one defined by formulas (\ref{f1n})--(\ref{f2n-k}), by adding to
the set $\tilde F$ suitable combinations of elements of ${\cal
U}_s$. Let $r^{(1)}, \dots, r^{(u)}$ be $u$ linearly independent
elements of $\zn^u$, with $r^{(1)}=s$. Choose $k'\in \nn$ such
that $1\leq k'\leq u$, and construct the set of $u-k'$ functions
\[
{\cal R}=(\im R_{m^{(1)}}, \dots, \im R_{m^{(u-k')}}) \,,
\]
where $m^{(1)}, \dots, m^{(u-k')}$ are $u-k'$ linearly independent
elements of $\zn^u$, such that $r^{(h)}\cdot m^{(j)}=0\
\forall\,h=1, \dots, k'$ and $\forall\,j=1, \dots, u-k'$.
Obviously ${\cal R}=\emptyset$ if $k'=u$. Consider then the set of
functions
\begin{equation} \label{fset}
F=({\cal J}_1, {\cal Z}; {\cal J}_2, {\cal R}, {\cal L})\,,
\end{equation}
where
\begin{align*}
{\cal J}_1&= (J_{r^{(1)}}, \dots, J_{r^{(k')}}) \,, \\
{\cal J}_2&= (J_{r^{(k'+1)}}, \dots, J_{r^{(u)}}) \,, \\
{\cal Z}&= \left(Z_{q_1, z_1} (P_{(1)}), \dots, Z_{q_u, z_u}
(P_{(u)})\right) \,,\\
{\cal L}&= \left(L_{q_1, z_1} (P_{(1)}), \dots, L_{q_u, z_u}
(P_{(u)}) \right) \,.
\end{align*}
Note that $J_{r^{(1)}}= J_{s}= F_1$. We have
\[
\sharp {\cal J}_1= k' \,, \qquad \sharp {\cal J}_2= u-k' \,,
\qquad \sharp {\cal R}= u-k' \,,
\]
\[
\sharp {\cal Z}= z:= \sum_{h=1}^u z_h \,, \qquad \sharp {\cal L}=
2\sum_{h=1}^u (q_h-z_h-1) =2(n-z-u)\,.
\]
Hence
\begin{align*}
&\sharp({\cal J}_1, {\cal Z})=k :=k'+z\,, \\
&\sharp F= 2n- z- k'= 2n-k\,.
\end{align*}

We have obviously
\[
\{{\cal J}_1, {\cal J}_1\}=\{{\cal J}_1, {\cal J}_2\}=0\,.
\]
From (\ref{kpij}) it follows that
\[
\{{\cal J}_1, {\cal Z}\}=\{{\cal J}_1, {\cal L}\}=\{{\cal J}_2,
{\cal Z}\}=0\,.
\]
From the second of (\ref{jr2}) it follows that
\[
\{{\cal J}_1, {\cal R}\}=0\,.
\]
From (\ref{zpi}) it follows that
\[
\{{\cal Z}, {\cal Z}\}=\{{\cal Z}, {\cal L}\}=0 \,.
\]
Finally, from the first of (\ref{fw2}) it follows that
\[
\{{\cal Z}, {\cal R}\}=0 \,.
\]
It is also possible to show that the set $F$ is functionally
independent. We are therefore able to formulate the following
\begin{prop}\label{osc}
The set $F$ defined by (\ref{fset}) is an integrable set of
functions with $k$ central elements $({\cal J}_1, {\cal Z})$,
where $k=k'+ z$.
\end{prop}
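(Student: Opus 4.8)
The commutation part of the claim is already in hand: the Poisson-bracket relations collected just before the statement show that each central element commutes with every element of $F$. Indeed ${\cal J}_1$ commutes with ${\cal J}_1,{\cal J}_2,{\cal Z},{\cal L}$ and ${\cal R}$, while ${\cal Z}$ commutes with ${\cal J}_1,{\cal J}_2,{\cal Z},{\cal L}$ and ${\cal R}$, so that $\{({\cal J}_1,{\cal Z}),F\}=0$; and the cardinality count already performed gives $\sharp F=2n-k$ with $\sharp({\cal J}_1,{\cal Z})=k=k'+z$. Hence the only point left to prove is the functional independence of $F$, i.e. that the differentials of its $2n-k$ elements are linearly independent at a generic point of phase space.

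My plan is to exhibit a block structure of the Jacobian adapted to the partition into groups. Writing $W_h=|W_h|e^{i\varphi_h}$, I would first read off from (\ref{fw2}), (\ref{fpijs}) and (\ref{wkp}) that, for each $h$, the bracket $\{K_h,W_h\}$ is proportional to $W_h$, whereas $\{W_h,P_{(j)}\}=0$ and $\{K_h,P_{(j)}\}=0$. Passing to modulus and phase, this yields $\{K_h,\varphi_h\}$ equal to a nonzero constant, $\{K_h,|W_h|\}=0$, and $\{\varphi_h,P_{ij}\}=\{K_h,P_{ij}\}=0$ for every $P_{ij}$ of group $h$; moreover all quantities of group $h$ Poisson-commute with those of the other groups. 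Thus, near a generic point, phase space splits symplectically, group by group, into a collective conjugate plane with coordinates $(K_h,\varphi_h)$ and an internal symplectic block of dimension $2q_h-2$ on which the momenta $P_{(h)}$ live. This is exactly the symplectic reduction underlying the central-force picture of \cite{part3} on which Proposition \ref{bambo} is built.

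On this splitting the three families of $F$ separate cleanly. The functions ${\cal J}_1,{\cal J}_2$ depend only on the collective actions $K_h$, and since $r^{(1)},\dots,r^{(u)}$ are linearly independent their differentials span all $u$ collective action directions. The functions ${\cal R}=(\im R_{m^{(1)}},\dots,\im R_{m^{(u-k')}})$ depend on the phases only through the combinations $\sum_h m^{(j)}_h\varphi_h$, and differentiating in the $\varphi$-directions produces the matrix $\big(m^{(j)}_h\,\re R_{m^{(j)}}\big)$, whose columns are the linearly independent vectors $m^{(j)}$ scaled by the (generically nonzero) factors $\re R_{m^{(j)}}$, hence of rank $u-k'$. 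Since the ${\cal J}$'s carry no $\varphi$-component, the collective block of the Jacobian is triangular and has rank $u+(u-k')=2u-k'$. On each internal block the corresponding functions $Z_{q_h,z_h}(P_{(h)})$ and $L_{q_h,z_h}(P_{(h)})$ are functionally independent by Proposition \ref{bambo}, contributing $\sum_h(2q_h-z_h-2)=2n-z-2u$; these blocks are disjoint across groups and decoupled from the collective planes. Adding the two contributions gives rank $(2u-k')+(2n-z-2u)=2n-z-k'=2n-k$, which is the asserted functional independence.

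The short linear-algebra steps (the bracket identities of the second paragraph and the $\varphi$-derivatives of ${\cal R}$) are routine; the real work lies in justifying the symplectic splitting. The main obstacle will be to produce, at a suitably generic point, honest local coordinates realizing the decomposition into collective planes $(K_h,\varphi_h)$ and internal blocks, and to check that the point can be chosen away from the loci $W_h=0$ and from the singular values of the momentum maps $P_{(h)}$, where either the modulus--phase change of variables or the hypotheses of Proposition \ref{bambo} would fail. Once such a point is fixed, the block-triangular form above is manifest and the rank computation is immediate, so I would devote most of the effort to setting up this reduction rather than to the elementary counting.
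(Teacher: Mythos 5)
Your proposal is correct, but it is worth pointing out that it does substantially more than the paper does. The paper's entire justification of Proposition \ref{osc} is the material you summarize in your first paragraph: the bracket relations obtained from (\ref{kpij}), (\ref{jr2}), (\ref{zpi}) and (\ref{fw2}), plus the count $\sharp F=2n-k$, $\sharp({\cal J}_1,{\cal Z})=k'+z$; functional independence is dispatched with the single sentence ``It is also possible to show that the set $F$ is functionally independent'', with no argument supplied. Your Jacobian block argument therefore fills a gap the paper leaves open rather than reproducing its proof, and it is sound: in coordinates adapted to the splitting, the ${\cal R}$-rows are the only ones with nonvanishing $\varphi$-entries (their $\varphi$-block is $\big(\partial_{\varphi_h}\im R_{m^{(j)}}\big)=\big(m^{(j)}_h\,\re R_{m^{(j)}}\big)$, of full row rank $u-k'$ at a point where every $\re R_{m^{(j)}}\neq 0$, since the $m^{(j)}$ are linearly independent), the ${\cal J}$-rows are the only remaining ones with $K$-entries (full rank $u$ by independence of the $r^{(i)}$), and the $({\cal Z},{\cal L})$-rows live in the internal columns, where Proposition \ref{bambo} gives rank $z+2(n-z-u)$; the total is $(u-k')+u+2n-z-2u=2n-k$. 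Moreover, the step you flag as the ``real work'' is less daunting than you fear: the brackets already computed show that $(K_h,\varphi_h)_{h=1,\dots,u}$ is a partial canonical system wherever all $W_h\neq 0$ (namely $\{K_h,K_j\}=\{\varphi_h,\varphi_j\}=0$ and $\{\varphi_h,K_j\}=2\delta_{hj}$), so the Carath\'eodory--Jacobi--Lieb completion theorem extends it to a Darboux chart; and since every $P_{ij}$ Poisson-commutes with all $K_h$ and with all $W_h,\bar W_h$ (hence with all $|W_h|$ and $\varphi_h$), each $P_{ij}$ is locally a function of the transverse coordinates alone, which is exactly your internal block. The genericity requirements ($W_h\neq 0$ for all $h$, $\re R_{m^{(j)}}\neq 0$ for all $j$, and a point of independence for each $\Pi_{q_h,z_h}$) are finitely many dense open conditions, so a common point exists. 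The only slip is cosmetic: the vectors $m^{(j)}$ scaled by $\re R_{m^{(j)}}$ are the rows, not the columns, of your $\varphi$-block.
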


From this propositions it obviously follows that any system with
hamiltonian $H=f({\cal J}_1, {\cal Z})$, where $f$ is any function
of $k$ variables, is integrable with the same integrable set $F$.

It is also possible to obtain an integrable set of functions
$F\supset \tilde F$, whose central subset includes elements of
$({\cal A}_s, {\cal B}_s)$. To this purpose, we shall follow a
procedure similar to the one which led us before to the set
(\ref{r1r2}). If $1\leq k'\leq u/2$, let $m^{(1)}, \dots,
m^{(k')}, m^{(k'+1)}, \dots, m^{(h)}\in \zn^{u}$ be linearly
independent vectors, such that $m^{(i)}\cdot s=0$ and $m^{(i)}_p
m^{(j)}_p =0\ \forall\, i=1, \dots, h$ and $j=1, \dots, k'$, and
$\forall\, p=1, \dots, n$. It is easy to see that one must have
either $h=k'$ or $ k'<h< u-k'$. Furthermore, let us take $u-k'$
linearly independent vectors $r^{(1)}, \dots, r^{(u-k')}\in
\zn^{u}$, with $r^{(1)}=s$, $r^{(i)} \cdot m^{(j)}=0\ \forall
\,i=1, \dots, u-k'$ and $j=1, \dots, k'$, in such a way that the
first $u-h$ vectors of this set satisfy the additional relations
$r^{(i)} \cdot m^{(j)}=0\ \forall \,i=1, \dots, u-h$ and $j=k'+1,
\dots, h$. Let us then consider the set
\begin{equation}\label{fset2}
F= ({\cal J}_1, {\cal R}_1, {\cal Z}; {\cal J}_2, {\cal R}_2,
{\cal L})\,,
\end{equation}
where
\begin{align*}
{\cal J}_1 &=(J_{r^{(1)}}, \dots, J_{r^{(u- h)}})\,, \\
{\cal J}_2 &=(J_{r^{(n-h+1)}}, \dots, J_{r^{(u-k')}})\,, \\
{\cal R}_1 &= (\im R_{m^{(1)}}, \dots, \im R_{m^{(k')}})\,,
\\
{\cal R}_2 &= (\im R_{m^{(k'+1)}}, \dots, \im R_{m^{(h)}})\,.
\end{align*}
It is easy to see that $F$ contains $2n-k$ elements, where
$k=u-h+k'+z$.
\begin{prop}\label{osc2}
The set $F$ defined by (\ref{fset2}) is an integrable set of
functions with $k$ central elements $(F_1, \dots, F_{k})= ({\cal
J}_1, {\cal R}_1, {\cal Z})$.
\end{prop}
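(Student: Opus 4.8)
```latex
The plan is to prove Proposition \ref{osc2} by verifying the two conditions
that define an integrable set with $k$ central elements: first, that the
proposed central subset $({\cal J}_1, {\cal R}_1, {\cal Z})$ Poisson-commutes
with every element of $F$; and second, that all $2n-k$ functions in $F$ are
functionally independent. The strategy closely parallels the verification
already carried out in detail for the set (\ref{fset}) in Proposition
\ref{osc}, so I would organize the commutation check block by block, reusing
the Poisson-bracket relations (\ref{kpij}), (\ref{zpi}), (\ref{fw2}) and
(\ref{jr2}) established earlier.

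First I would dispose of the brackets not involving the new elements ${\cal
R}_1$. The relations $\{{\cal J}_1, {\cal J}_2\}=0$ follow since all $J_r$ are
built from the commuting $K_h$; the brackets $\{{\cal J}_i, {\cal Z}\}=
\{{\cal J}_i, {\cal L}\}=0$ and $\{{\cal Z}, {\cal Z}\}=\{{\cal Z}, {\cal
L}\}=0$ come directly from (\ref{kpij}) and (\ref{zpi}). The genuinely new
work concerns the set ${\cal R}_1=(\im R_{m^{(1)}}, \dots, \im R_{m^{(k')}})$.
For $\{{\cal R}_1, {\cal Z}\}$ and $\{{\cal R}_1, {\cal L}\}$ I would invoke
the first of (\ref{fw2}), namely $\{W_h, P_{(j)}\}=0$, which shows that every
$R_m$ commutes with all momenta and hence with ${\cal Z}$ and ${\cal L}$. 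For
$\{{\cal J}_1, {\cal R}_1\}$ the key is the second of (\ref{jr2}): since the
vectors $r^{(1)}, \dots, r^{(u-h)}$ defining ${\cal J}_1$ were chosen to
satisfy $r^{(i)}\cdot m^{(j)}=0$ for all $j=1, \dots, h$, each bracket $\{J_{r^{(i)}},
\im R_{m^{(j)}}\}= -2\,(r^{(i)}\cdot m^{(j)})\,\re R_{m^{(j)}}$ vanishes.

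The central commutation that requires the carefully imposed orthogonality is
$\{{\cal R}_1, {\cal R}_1\}=0$ and $\{{\cal R}_1, {\cal R}_2\}=0$. Here the
disjoint-support condition $m^{(i)}_p m^{(j)}_p=0$ for all $i=1,\dots,h$ and
$j=1,\dots,k'$ guarantees, exactly as in the discussion preceding
(\ref{r1r2}), that $\{\im R_{m^{(i)}}, \im R_{m^{(j)}}\}=0$: two functions
$R_{m^{(i)}}, R_{m^{(j)}}$ built from the $W_h$ with non-overlapping indices
involve distinct groups, and $\{W_h, W_j\}=0$ together with $\{W_h, \bar
W_j\}=4i\delta_{jh}K_h$ from (\ref{ww}) forces the bracket to vanish. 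This
completes the proof that $({\cal J}_1, {\cal R}_1, {\cal Z})$ is central.

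The step I expect to be the main obstacle is the functional independence of
the full set $F$, since the combinatorics of the index constraints is
delicate: one must simultaneously track the $u-k'$ vectors $r^{(i)}$, the $h$
vectors $m^{(j)}$ with their nested orthogonality relations, and the momenta
entering ${\cal Z}$ and ${\cal L}$. I would argue independence at a generic
point by exhibiting that the differentials split into blocks with disjoint
dependence: the $J_r$ depend on the $K_h$ (equivalently the $I_i$) through the
linearly independent vectors $r^{(i)}$; the $\im R_{m^{(j)}}$ contribute
independent directions because the $m^{(j)}$ are linearly independent and
their supports are controlled; and the functional independence of $({\cal Z},
{\cal L})$ within each group is guaranteed by property (3) of Proposition
\ref{bambo}. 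The count $\sharp F = 2n-k$ with $k=u-h+k'+z$ then matches,
confirming that $F$ is an integrable set with the stated central subset.
```
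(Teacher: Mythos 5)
Your block-by-block verification is essentially the argument the paper leaves implicit: Proposition \ref{osc2} is stated there without an explicit proof, the reader being expected to repeat for (\ref{fset2}) the checks carried out for (\ref{fset}), using (\ref{kpij}), (\ref{zpi}), (\ref{fw2}), (\ref{jr2}) and the disjoint-support condition on the $m^{(j)}$, exactly as you do. However, your enumeration of brackets skips one family that is required for the stated centrality. The central elements must commute with \emph{every} element of $F$, and ${\cal J}_2=(J_{r^{(u-h+1)}},\dots,J_{r^{(u-k')}})$ belongs to $F$; hence you must also show $\{{\cal J}_2,{\cal R}_1\}=0$. Your argument invokes only the orthogonality of $m^{(1)},\dots,m^{(h)}$ with the first $u-h$ vectors $r^{(i)}$, so as written it establishes $\{{\cal J}_1,({\cal R}_1,{\cal R}_2)\}=0$ but says nothing about the brackets of ${\cal R}_1$ with ${\cal J}_2$, and your sentence ``this completes the proof that $({\cal J}_1,{\cal R}_1,{\cal Z})$ is central'' is therefore premature.

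The gap closes with the tools you already cite: the construction preceding (\ref{fset2}) requires $r^{(i)}\cdot m^{(j)}=0$ for \emph{all} $i=1,\dots,u-k'$ and $j=1,\dots,k'$ (the stronger condition involving $j=k'+1,\dots,h$ is what is restricted to $i\leq u-h$), so the second of (\ref{jr2}) gives $\{J_{r^{(i)}},\im R_{m^{(j)}}\}=-2\,(r^{(i)}\cdot m^{(j)})\,\re R_{m^{(j)}}=0$ also for the indices entering ${\cal J}_2$ and ${\cal R}_1$. With that one line added, your commutation check is complete. As for functional independence, your sketch (block structure of the differentials, linear independence of the $r^{(i)}$ and $m^{(j)}$, property 3 of Proposition \ref{bambo}) is at the same level of detail as the paper itself, which merely asserts independence; note only that the count $\sharp F=2n-k$ is a consistency check, not by itself evidence of independence.
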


Hence any system with hamiltonian $H=f({\cal J}_1, {\cal R}_1,
{\cal Z})$, where $f$ is any function of $k$ variables, is
integrable with the same integrable set $F$.

\section{Quantum oscillators with completely resonant set of
frequencies} \label{qosc}

In order to maintain the correspondence between our notation and
the one usually employed in physics, in this section we shall
associate with the classical impulses $p$ the complex operators
$\hat p= -i\partial/\partial x$, where $i=\sqrt{-1}$. We have with
this convention $\hat p=\hat p^*$, where $\hat p^*$ denotes the
hermitian conjugate of the operator $\hat p$. It is obvious that
this modification does not substantially affect the general
results on quantization which have been obtained in the preceding
papers of this series, although some formulas have to be corrected
by the introduction of one or more factors $i$. The standard
canonical commutation relations become
\begin{equation} \label{ccr2}
[x_i,x_j]= 0\,, \qquad [\hat p_i,\hat p_j]= 0\,, \qquad [\hat p_i,
x_j] = -i\delta_{ij}
\end{equation}
for $i,j= 1, \dots, n$, where $\delta_{ij}$ is the Kr\"onecker
symbol. The standard quantization of the functions $z_j, \bar z_j$
is given by
\[
\hat z_j=\frac{x_j +i\hat p_j}{\sqrt 2} \,, \qquad \hat z^*_j=
\frac{x_j -i\hat p_j}{\sqrt 2} \,,
\]
where $\hat z^*_j$ denotes the hermitian conjugate of the operator
$\hat z_j$. These operators satisfy the commutation relations
\begin{equation}
[\hat z_i,\hat z_j]=[\hat z^*_i,\hat z^*_j]= 0\,, \qquad [\hat
z_i,\hat z^*_j]= \delta_{ij}
\end{equation}
for $i,j=1,\dots,n$. Note that, with the conventions presently
adopted for the quantization of impulses $p$, the Poisson bracket
of two functions is now replaced by the commutator of the
corresponding operators multiplied by $i$.

Let as above $F_1= l_1 I_1 + \cdots + l_n I_n$ be the hamiltonian
function of the system describing linear oscillations with a
completely resonant set of frequencies. The standard quantization
$\hat F_1$ of $F_1$ clearly coincides with its symmetrization
$F_1^{\rm sym}$ with respect to $(x, \hat p)$. On the other hand,
since the operators $\hat z, \hat z^*$ are linear combinations of
the operators $x, \hat p$, it is easy to see that the
symmetrization with respect to $(x, \hat p)$ is equivalent to the
symmetrization with respect to $(\hat z, \hat z^*)$. We thus have
$\hat F_1= l_1 \hat I_1 + \cdots + l_n \hat I_n$, where
\[
\hat I_j= I_j^{\rm sym}= \frac{x_j^2 +\hat p_j^2}2 =\frac{\hat
z^*_j\hat z_j +\hat z_j\hat z^*_j}2 \,.
\]

In the general situation described by relations (\ref{ls}), we can
write $\hat F_1= s_1 \hat K_1 + \cdots + s_u \hat K_u$, where
\[
\hat K_h:= \sum_{i=p_{h-1}+1}^{p_h} \epsilon_i \hat I_i\,, \qquad
h=1, \dots, u\,.
\]
Let us introduce the operators
\begin{equation*} %\label{wi}
\hat w_i:= \begin{cases} \hat z_i &\text{if }\ \epsilon_i=+1\,,  \\
\hat z_i^* &\text{if }\ \epsilon_i=-1\,, \end{cases}
\end{equation*}
which satisfy the relations
\begin{equation}
[\hat w_i,\hat w_j]=[\hat w^*_i,\hat w^*_j]= 0\,, \qquad [\hat
w_i,\hat w^*_j]= \epsilon_i\delta_{ij}\,,
\end{equation}
\[
\hat I_j= \frac{\hat w^*_j\hat w_j +\hat w_j\hat w^*_j}2 \,,
\]
for $i,j=1,\dots,n$. We then define
\begin{align*}%\label{pijo}
\hat P_{ij}&= i(\hat w_i \hat w_j^*- \hat w_j \hat w_i^*)= -\hat
P_{ji}\,, \qquad i,j=1, \dots, n \,,\\
\hat W_h&= \sum_{i=p_{h-1}+1}^{p_h} \epsilon_i \hat w_i^2\,,
\qquad h=1, \dots,u\,.
\end{align*}
Clearly all operators of the set $(\hat K, \hat P, \hat W, \hat
W^*)$ coincide with the symmetrization with respect to $(\hat w,
\hat w^*)$ of the corresponding classical functions. It then
follows from proposition %\ref{poiss}
3.1 (case 1) of \cite{part2} that the commutators (multiplied by
$i$) between these operators have the same form as the Poisson
brackets between the corresponding classical functions. In
particular, from (\ref{fpijs}), (\ref{ww}) and (\ref{fw2}) we get
\begin{equation}\label{kp}
[\hat K_h, \hat P_{(j)}]=0\,, \qquad [\hat W_h, \hat P_{(j)}]
=0\,,
\end{equation}
\begin{equation}\label{wwq}
[\hat W_h, \hat W_j] =0\,, \qquad [\hat W_h, \hat W_j^*] =4
\delta_{jh} \hat K_h \,, \qquad [\hat W_h, \hat K_j] =2
\delta_{jh} \hat W_h
\end{equation}
$\forall\, j,h=1, \dots, u$, where $\hat P_{(j)}$ is the standard
quantization of the set $P_{(j)}$ defined by formula (\ref{ph}).
We also introduce the operators
\begin{equation} \label{rm2q}
\hat R_m= \hat \Omega_1^{m_1} \hat \Omega_2^{m_2}\cdots \hat
\Omega_u^{m_u} \,,
\end{equation}
where $m\in \zn^u$ and
\begin{equation} \label{zeta2q}
\hat \Omega_h^{m_h}:= \begin{cases} \hat W_h^{m_h} &\text{if }\ m_h\geq 0\,,  \\
(\hat W_h^*)^{-m_h} &\text{if }\ m_h< 0\,. \end{cases}
\end{equation}
Note that one need not specify the ordering of the operators on
the right-hand sice of (\ref{rm2q}), since all these operators
commute according to (\ref{wwq}).

Let us consider the integrable set (\ref{fset}) for the classical
system. We put
\begin{align*}
\hat{\cal J}_1&= (\hat J_{r^{(1)}}, \dots, \hat J_{r^{(k')}}) \,, \\
\hat {\cal J}_2&= (\hat J_{r^{(k'+1)}}, \dots, \hat J_{r^{(u)}})
\,,
\end{align*}
where
\begin{equation*} %\label{j2}
\hat J_{r^{(i)}}= \sum_{h=1}^u r^{(i)}_h \hat K_h\,, \qquad i=1,
\dots, u \,.
\end{equation*}
We put also
\begin{align*}
\hat {\cal Z}&= (\hat Z_{q_1, z_1} (\hat P_{(1)}), \dots, \hat
Z_{q_u, z_u} (\hat P_{(u)})) \,,\\
\hat {\cal L}&= (\hat L_{q_1, z_1} (\hat P_{(1)}), \dots, \hat
L_{q_u, z_u} (\hat P_{(u)})) \,,
\end{align*}
where polynomials $\hat Z_{n,z}$ and $\hat L_{n,z}$ have the same
form as the classical ones $Z_{n,z}$ and $L_{n,z}$ (symmetrization
is here unnecessary, since in these polynomials all monomials of
degree 2 are squares of components of $\hat P$). We finally put
\[
\hat{\cal R}=(\im \hat R_{m^{(1)}}, \dots, \im \hat
R_{m^{(u-k')}}) \,,
\]
where we define the ``imaginary part'' of the operator $\hat
R_{m^{(h)}}$ as
\[
\im \hat R_{m^{(h)}}:= \frac 1{2i} \left(\hat R_{m^{(h)}}- \hat
R_{m^{(h)}}^*\right) \,.
\]
The operators of the set $\hat{\cal R}$ obviously coincide with
the symmetrization with respect to $(\hat W, \hat W^*)$ of the
functions of the set ${\cal R}$. From the isomorphism between the
two Lie algebras respectively generated by the functions $(K, P,
W, \bar W)$ and by the operators $(\hat K, \hat P, \hat W, \hat
W^*)$, it follows that one can deduce the commutation relations
\[
[\hat{\cal J}_1, \hat{\cal J}_1]=[\hat{\cal J}_1, \hat{\cal
J}_2]=0\,,
\]
\[
[\hat{\cal J}_1, \hat{\cal Z}]=[\hat{\cal J}_1, \hat{\cal L}]
=[\hat{\cal J}_2, \hat{\cal Z}]= [\hat{\cal J}_1, \hat{\cal R}]=0
\]
from the corresponding Poisson bracket relations, by making use of
proposition %\ref{appl2},
4.2, case a, of \cite{part2}. In order to prove that
\[
[\hat{\cal Z}, \hat{\cal Z}]=[\hat{\cal Z}, \hat{\cal L}]=0\,,
\]
we note that relations
\[
[\hat Z_{q_h, z_h} (\hat P_{(h)}), \hat Z_{q_j, z_j} (\hat
P_{(j)})] = [\hat Z_{q_h, z_h} (\hat P_{(h)}), \hat L_{q_j, z_j}
(\hat P_{(j)})]=0
\]
for $h\neq j$ follow from $[\hat P_{(h)}, \hat P_{(j)}]=0$. For
$h=j$ they can instead be deduced from proposition \ref{bambo}, by
repeating the same arguments that were used in \cite{part3} to
deduce proposition 2.7 %\ref{bamb2} from proposition \ref{bamb}
from proposition 2.4 in the analogous situation of the central
force field. Finally, from the second of (\ref{kp}) it follows
that
\[
[\hat{\cal Z}, \hat{\cal R}]=0 \,.
\]

\begin{prop}\label{oscq}
The set $\hat F=(\hat{\cal J}_1, \hat{\cal Z}; \hat{\cal J}_2,
\hat{\cal R}, \hat{\cal L})$ is a quasi-integrable set of
operators with $k$ central elements $(\hat{\cal J}_1, \hat{\cal
Z})$, where $k=k'+ z$.
\end{prop}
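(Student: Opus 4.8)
The plan is to read off quasi-integrability directly from the definition used in \cite{part1,part2}: I must check that the declared central subset $(\hat{\cal J}_1,\hat{\cal Z})$ commutes with every operator of $\hat F$, and that the principal symbols of the $2n-k$ operators in $\hat F$ form a functionally independent collection which, by construction, coincides with the classical integrable set $F$ of Proposition \ref{osc}. The commutativity part is essentially assembled already in the paragraphs preceding the statement; the remaining work is to organize those relations and to transfer the classical functional independence to the quantum symbols.

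First I would collect the commutators. The operators $(\hat K,\hat P,\hat W,\hat W^*)$ coincide with the symmetrizations of the corresponding classical functions and close into a Lie algebra isomorphic to the classical one generated by $(K,P,W,\bar W)$; hence by proposition 3.1 (case 1) of \cite{part2} the commutators, multiplied by $i$, reproduce the Poisson brackets (\ref{kpij})--(\ref{wkp}). The relations that are \emph{linear} in this algebra — namely $[\hat{\cal J}_1,\hat{\cal J}_1]=[\hat{\cal J}_1,\hat{\cal J}_2]=0$ and $[\hat{\cal J}_1,\hat{\cal Z}]=[\hat{\cal J}_1,\hat{\cal L}]=[\hat{\cal J}_2,\hat{\cal Z}]=[\hat{\cal J}_1,\hat{\cal R}]=0$ — are lifted from their classical counterparts by proposition 4.2 (case a) of \cite{part2}. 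Together with $[\hat{\cal Z},\hat{\cal Z}]=[\hat{\cal Z},\hat{\cal L}]=0$ and $[\hat{\cal Z},\hat{\cal R}]=0$, these cover every pairing of a central block ($\hat{\cal J}_1$ or $\hat{\cal Z}$) with each of the five blocks of $\hat F$, so $(\hat{\cal J}_1,\hat{\cal Z})$ is indeed central. This is the step I expect to state most briefly, since the algebra has already been carried out above.

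Next I would address functional independence. By construction each entry of $\hat F$ is the symmetrization with respect to $(\hat w,\hat w^*)$, equivalently $(x,\hat p)$, of the corresponding entry of the classical set $F$ in (\ref{fset}); for $\hat{\cal Z}$ and $\hat{\cal L}$ no symmetrization is even needed, since their degree-$2$ monomials are squares of components of $\hat P$. Hence the principal symbol of each operator equals the classical function it quantizes, and since $F$ is functionally independent by Proposition \ref{osc}, so is the family of symbols. This is precisely the independence condition entering the definition of a quasi-integrable set, and it simultaneously identifies the classical limit of $\hat F$ with the integrable set $F$.

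The main obstacle, as usual in this series, is the diagonal ($h=j$) part of $[\hat{\cal Z},\hat{\cal Z}]=0$ and $[\hat{\cal Z},\hat{\cal L}]=0$. For $h\neq j$ these vanish at once from $[\hat P_{(h)},\hat P_{(j)}]=0$, but for $h=j$ the entries are genuine quadratic expressions in the noncommuting momenta $\hat P_{ij}$, so the symbol-level correspondence does not by itself force the commutator to vanish exactly — one must rule out an anomalous lower-order term. This is handled by rerunning, inside each group, the recursive argument of \cite{part3} that deduced proposition 2.7 from proposition 2.4; the same reasoning applied to Proposition \ref{bambo} yields the exact vanishing. The conceptual point underlying the whole argument is that the relevant generators close into a \emph{finite-dimensional} Lie algebra, which is exactly the hypothesis that makes the quantization theorems of \cite{part2} produce commutators free of spurious lower-order corrections.
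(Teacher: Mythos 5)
Your proposal is correct and follows essentially the same route as the paper: the paper's own proof simply points back to the commutation relations assembled in the preceding paragraphs (symmetrization plus propositions 3.1 and 4.2 of \cite{part2} for the linear blocks, $[\hat P_{(h)},\hat P_{(j)}]=0$ for the off-diagonal $\hat{\cal Z}$, $\hat{\cal L}$ pairings, and the recursion of \cite{part3} transferring Proposition \ref{bambo} to the operators for the diagonal case), and then asserts that quasi-independence is ``not too difficult'' to verify. Your symbol-level argument for quasi-independence just fills in that last step explicitly, which is a welcome elaboration rather than a departure.
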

\begin{proof}
We have just shown that the elements of the set $\hat F$ satisfy
the required commutation relations. It is not too difficult to
complete the proof of this proposition, by showing that the set is
also quasi-independent.
\end{proof}
From  this proposition it follows that any quantum system with
hamiltonian operator $\hat H=f(\hat{\cal J}_1, \hat{\cal Z})$,
where $f$ is an arbitrary polynomial of $k$ variables, is
integrable with the same integrable set of operators $\hat F$.

In a similar way one can quantize the integrable set
(\ref{fset2}). Let us define in this case
\begin{align*}
\hat{\cal J}_1 &=(\hat J_{r^{(1)}}, \dots, \hat J_{r^{(u- h)}})\,, \\
\hat{\cal J}_2 &=(\hat J_{r^{(n-h+1)}}, \dots, \hat J_{r^{(u-k')}})\,, \\
\hat{\cal R}_1 &= (\im \hat R_{m^{(1)}}, \dots, \im \hat
R_{m^{(k')}})\,, \\
\hat{\cal R}_2 &= (\im \hat R_{m^{(k'+1)}}, \dots, \im \hat
R_{m^{(h)}})\,.
\end{align*}
\begin{prop}
The set $\hat F=(\hat{\cal J}_1, \hat{\cal R}_1, \hat{\cal Z};
\hat{\cal J}_2, \hat{\cal R}_2, \hat{\cal L})$ is a
quasi-integrable set of operators with $k$ central elements
$(\hat{\cal J}_1, \hat{\cal R}_1, \hat{\cal Z})$, where
$k=u-h+k'+z$.
\end{prop}
Hence any quantum system with hamiltonian operator $\hat
H=f(\hat{\cal J}_1, \hat{\cal R}_1, \hat{\cal Z})$, where $f$ is
an arbitrary polynomial of $k$ variables, is integrable with the
same integrable set of operators $\hat F$.

\section{The case $n=3$}

In section \ref{closc} we have constructed a remarkable class of
integrable sets of functions for an arbitrary system of resonant
oscillators. However there may exist other integrable sets which
do not belong to the class we have considered. In the present
section we shall make an attempt to systematically classify all
integrable sets of polynomial functions in the canonical variables
$(x,p)$, or equivalently $(z, \bar z)$, for $n=3$. Since our goal
is to obtain the largest possible class of integrable systems, we
shall restrict our attention to integrable sets of functions
having the largest number $k$ of central elements, i.e., $k=3$.
Hence, we shall aim at characterizing all sets $F=(F_1, F_2, F_3)$
of functionally independent polynomial functions, such that $F_1 =
J_l= l_1 I_1+ l_2 I_2+ l_3 I_3 $ and $\{F_i, F_j\}=0$ for
$i,j=1,2,3$. To each of these sets there corresponds a class of
integrable systems with hamiltonian $H=f(F)$, where $f$ is an
arbitrary function of three variables. This class is obviously the
same for two integrable sets which are functionally equivalent to
each other, i.e., such that the elements of one set are locally
functions of the elements of the other set. Therefore in the
following, whenever we shall mention an integrable set, we shall
implicitly refer to the whole equivalence class to which it
belongs.

If the three frequencies are pairwise different, i.e., $l_i\neq
l_j$ for $i\neq j$, according to the discussion of section
\ref{gs} there exist integrable sets with $k=3$ of the form
\begin{equation}\label{fint1}
F=\big(F_1, J_{r}, f(I_1, I_2, I_3, \im R_{m}) \big)\,,
\end{equation}
where $f$ is an arbitrary function of four variables such that the
set $F$ is functionally independent. Here $r\in \zn^3$ and $m \in
\zn^3$ are two nonvanishing vectors such that $r$ is linearly
independent of $l$, and $l \cdot m = r \cdot m=0$. For example,
the trivial integrable set $F=(F_1, I_2, I_3)$ corresponds to
taking $r= (0,1,0)$ and $f(x_1, x_2, x_3, x_4) =x_3$ in
(\ref{fint1}). Note that there exist only 4 functionally
independent functions in involution with both $F_1$ and $F_2=
J_{r}$. Since 4 such functions are $(I_1, I_2, I_3, \im R_{m})$,
for any integrable set such that $F_2= J_{r}$ we can locally write
$F_3= f(I_1, I_2, I_3, \im R_{m})$ as in (\ref{fint1}). For
example, owing to (\ref{imre}), we obtain $F_3= \re R_{m}$ by
taking $f(x_1, x_2, x_3, x_4)= \pm \sqrt{x_1^{|m_1|} x_2^{|m_2|}
x_3^{|m_3|}- x_4^2}$.

For the classification of integrable sets it is useful to
introduce, besides the concept of functional equivalence
%(see definition \ref{feq}),
given in \cite{part3}, also that of ``canonical equivalence''
between integrable sets.
%\subsection{Equivalence between integrable algebras}
\begin{defn}
We say that two integrable sets $F$ and $F'$ are {\it canonically
equivalent} if there exists a symplectic (i.e., linear and
canonical) transformation which transforms one set into the other.
In such a case, we say that also the two classes of functional
equivalence, to which the two sets respectively belong, are
canonically equivalent. This means that, if $G$ is functionally
equivalent to $F$, and $G'$ is functionally equivalent to $F'$,
then we say that $G$ is canonically equivalent to $G'$. In
particular, two functionally equivalent sets are also canonically
equivalent.
\end{defn}
Let ${\cal G}_{l}$ be the group of all the symplectic
transformations $g:(z, \bar z)\mapsto (Z, \bar Z)$ which leave
$F_1$ invariant, i.e., such that $F_1(z,\bar z)= F_1(Z,\bar Z)$,
or explicitly
\begin{equation}
l_1 z_1 \bar z_1 +l_2 z_2 \bar z_2+ l_3 z_3 \bar z_3 = l_1 Z_1
\bar Z_1 +l_2 Z_2 \bar Z_2+ l_3 Z_3 \bar Z_3\ .
\end{equation}
Given an integrable set $F= \left(F_1(z,\bar z), F_2(z,\bar z),
F_3(z,\bar z) \right)$, we can for any $g\in {\cal G}_{l}$
construct a set $F' =\left(F_1(Z,\bar Z), F_2(Z,\bar Z),
F_3(Z,\bar Z)\right)$ which is canonically equivalent to $F$.

%\subsubsection{First integrals}

Let us consider the Lie algebra ${\cal L}_{l}$ of the second
degree real polynomials which are in involution with $F_1$. The
group ${\cal G}_{l}$ contains all the one-parameter subgroups of
canonical transformations generated by the elements of ${\cal
L}_{l}$. For any $G\in {\cal L}_{l}$ we write the associated
subgroup in the form $ z\mapsto Z(\tau)$, with
\begin{equation}\label{sub}
\frac{d Z(\tau)}{d\tau} =\{Z(\tau),G\}\,, \qquad Z(0)= z\ .
\end{equation}

If the three frequencies are pairwise different, a linear basis of
this algebra is given by the set $(I_1, I_2, I_3)$. Let us
consider the subgroup generated by $G=I_1$. According to
(\ref{sub}) we have
\[
Z_1(\tau)= e^{i\tau} z_1\,, \qquad Z_2(\tau)= z_2\,, \qquad
Z_3(\tau)= z_3\,.
\]
Similar formulas hold for $G=I_2$ and $G=I_2$. Therefore, all
transformations of the group ${\cal G}_{l}$ have the form
\[
Z_1= \exp(i\phi_1) z_1\,, \qquad Z_2= \exp(i\phi_2) z_2\,, \qquad
Z_3= \exp(i\phi_3) z_3\,,
\]
with $\phi_i \in \rn$ for $i=1,2,3$. This means that ${\cal
G}_{l}$ is an abelian group isomorphic to $U(1)\times U(1) \times
U(1)$.

It can be useful to classify integrable sets $F=(F_1, F_2, F_3)$
of polynomial functions according to the degree of polynomials
$F_2$ and $F_3$. We shall always suppose that these polynomials
have been chosen in such a way that $\deg F_2\leq \deg F_3$, and
that there do not exist functionally equivalent sets of lower
degree. More precisely, we suppose that there does not exist
another functionally equivalent integrable set of polynomials
$(F_1, F_2', F_3')$ such that $\deg F_2'<\deg F_2$, or $\deg
F_2'=\deg F_3$ and $\deg F_3'<\deg F_3$.
\begin{defn}
Let $F=(F_1, F_2, F_3)$ be an integrable set, where $F_1= l_1 I_1+
l_2 I_2+ l_3 I_3 $ and $\deg F_2\leq \deg F_3$. We say that $F$ is
a {\it simple integrable set} if $\deg F_2 =2$. We say that a
simple integrable set $F$ has degree $d$ if $\deg F_3 =d$.
\end{defn}
Note that all integrable sets of the form (\ref{fint1}), which
were obtained using the general methods of sections \ref{gs} and
\ref{eqfr}, are simple. Since a symplectic transformation does not
alter the degree of a polynomial function, an integrable set which
is canonically equivalent to a simple integrable set is also
simple, and two canonically equivalent simple integrable sets have
the same degree.

\subsection{The case $|l_1|=|l_2|\neq |l_3|$}\label{case}

If $l_1=l_2$, by applying the results of section \ref{eqfr} we
obtain another type of simple integrable sets, in addition to that
given by formula (\ref{fint1}). It has the form
\begin{equation}\label{fint2}
F=\big(F_1, P_{12}, f(I_1+ I_2, I_3, \im R, \re R)\big)\,,
\end{equation}
where $f$ is an arbitrary function of 4 variables such that the
set $F$ is functionally independent, $P_{12}= 2\im (\bar z_1
z_2)$, and
\begin{equation} \label{r3}
R= \begin{cases}(\bar z_1^2+ \bar z_2^2)^{|l_3|} \,
z_3^{2|l_1|}\qquad &\text{if }l_1 l_3>0\,, \\
(z_1^2+ z_2^2)^{|l_3|} \,z_3^{2|l_1|}\qquad &\text{if } l_1
l_3<0\,.\end{cases}
\end{equation}
In a similar way, one can see that there exist also simple
integrable sets of the form
\begin{equation}\label{fint3}
F=\big(F_1, Q_{12}, f(I_1+ I_2, I_3, \im S, \re S)\big)\,,
\end{equation}
where $Q_{12}= 2\re (\bar z_1 z_2)$, and
\begin{equation} \label{s3}
S= \begin{cases}(\bar z_1^2- \bar z_2^2)^{|l_3|} \,
z_3^{2|l_1|}\qquad &\text{if }l_1 l_3>0\,, \\
(z_1^2- z_2^2)^{|l_3|} \,z_3^{2|l_1|}\qquad &\text{if } l_1
l_3<0\,.\end{cases}
\end{equation}
Analogous integrable sets also exist for $l_1=-l_2$. For the sake
of simplicity, in the following we shall write down explicitly
only the formulas which are valid for $l_1=l_2$.

\begin{prop} \label{ll}
For $l_1= l_2\neq l_3$, the Lie algebra ${\cal L}_{l}$ of the
second degree real polynomials in involution with $F_1$ has linear
dimension five. A basis of this algebra is given by the set
$L=(L_1, \dots, L_5)$, where
\begin{equation} \label{2gcom}\begin{split}
L_1&=-{\rm Re}\{\bar z_1 z_2\}= -\frac {\bar z_1 z_2+z_1\bar
z_2} 2 = -\frac {x_1x_2+p_1p_2} 2 = -\frac 12 Q_{12} \,,\\
L_2&=-{\rm Im}\{\bar z_1 z_2\}= i\frac {\bar z_1 z_2 -z_1\bar
z_2}{2}= \frac {p_1x_2 -x_1p_2} 2 = -\frac 12 P_{12} \,,\\
L_3&= -\frac {I_1- I_2} 2 = \frac {-\bar z_1 z_1+ \bar z_2
z_2} 2 = \frac {-x_1^2- p_1^2 +x_2^2+ p_2^2} 4 \,,\\
L_4&= -\frac {I_1+I_2} 2 = -\frac {\bar z_1 z_1+ \bar z_2
z_2} 2 = -\frac {x_1^2+p_1^2 +x_2^2+p_2^2} 4 \,,\\
L_5 &= I_3 =  \bar z_3 z_3 = \frac {x_3^2+p_3^2} 2 \, .
\end{split}
\end{equation}
The first four of these functions do not depend on $(x_3, p_3)$.
They satisfy the algebraic relation
\begin{equation}
L_1^2+L_2^2 = |\bar z_1 z_2|^2 =I_1I_2 = L_4^2 - L_3^2\ ,
\label{constr}
\end{equation}
or
\begin{equation}
\sum_{i=1}^3 L_i^2 = L_4^2\ . \label{constr1}
\end{equation}
The Poisson brackets between the elements of $L$ are
\begin{align}
\{L_i,L_j\} &=\sum_{k=1}^3\varepsilon_{ijk}L_k \qquad \quad
\text{for } i,j,k=1,2,3 \,,\label{pb1} \\
\{L_\mu,L_4\} &=\{L_\mu,L_5\}=0 \qquad \text{for }\mu=1, \dots,5
\,,\label{pb2}
\end{align}
where $\varepsilon_{ijk}$ is the completely antisymmetric tensor
such that $\varepsilon_{123}=1$.
\end{prop}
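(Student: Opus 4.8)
The plan is to proceed in three stages --- determine ${\cal L}_l$ explicitly and verify that $L$ is a basis, then check the algebraic relations, and finally compute all the Poisson brackets --- and I begin with the first. By (\ref{f1rpab}), the real or imaginary part of a monomial $P_{a,b}$ lies in ${\cal L}_l$ if and only if $l\cdot(a-b)=0$; restricting to the homogeneous degree-two part means $|a|+|b|=2$, so I would simply enumerate the finitely many monomials $z_i\bar z_j$, $z_iz_j$, $\bar z_i\bar z_j$ and impose $l\cdot(a-b)=0$ with $l=(l_1,l_1,l_3)$. The hypotheses enter here in an essential way: since $l_1\neq 0$ and $l_3\neq 0$, every purely holomorphic or antiholomorphic monomial $z_iz_j$, $\bar z_i\bar z_j$ is excluded; and since $|l_3|\neq|l_1|=|l_2|$ the mixed terms $z_i\bar z_3$, $z_3\bar z_i$ with $i=1,2$ are excluded as well, because $\pm l_1\pm l_3\neq 0$. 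What remains is exactly $I_1=z_1\bar z_1$, $I_2$, $I_3$, together with $z_1\bar z_2$ (whose survival requires precisely $l_1=l_2$) and its conjugate $z_2\bar z_1$. Passing to real and imaginary parts gives the five real functions $I_1,I_2,I_3,\re(\bar z_1z_2),\im(\bar z_1z_2)$, whence $\dim{\cal L}_l=5$; and since $L_1,\dots,L_5$ are obtained from these by an invertible linear transformation, $L$ is a basis.

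For the algebraic relations I would argue directly. By (\ref{imre}) one has $(\re(\bar z_1z_2))^2+(\im(\bar z_1z_2))^2=|\bar z_1z_2|^2=(z_1\bar z_1)(z_2\bar z_2)=I_1I_2$, which gives $L_1^2+L_2^2=I_1I_2$; on the other hand $L_4^2-L_3^2=\tfrac14\big[(I_1+I_2)^2-(I_1-I_2)^2\big]=I_1I_2$. Equating the two yields (\ref{constr}), and transposing $L_3^2$ gives (\ref{constr1}).

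For the Poisson brackets I would exploit that $L_1,L_2,L_3,L_4$ are built only from the quadratic currents $T_{ij}:=z_i\bar z_j$ with $i,j\in\{1,2\}$, which close under Poisson bracket: a one-line computation from (\ref{fpb}) gives $\{T_{ij},T_{kl}\}=i(\delta_{il}T_{kj}-\delta_{jk}T_{il})$, a realization of $\mathfrak{u}(2)$. Writing $L_1=-\tfrac12(T_{12}+T_{21})$, $L_2=\tfrac{i}{2}(T_{21}-T_{12})$, $L_3=\tfrac12(T_{22}-T_{11})$ and substituting yields $\{L_1,L_2\}=L_3$ and, in the same way, the two cyclic companions, which together constitute (\ref{pb1}). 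The relations (\ref{pb2}) are the easy ones: $L_4=-\tfrac12(T_{11}+T_{22})$ is the trace, hence central in this $\mathfrak{u}(2)$, so it Poisson-commutes with $L_1,L_2,L_3$; $L_5=I_3$ depends only on the third oscillator and therefore commutes with everything built from $(z_1,z_2)$; and $\{L_4,L_5\}=0$ trivially.

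I expect the only genuine point of care to be the completeness asserted in the first stage, i.e.\ that no degree-two monomial other than the five listed survives the condition $l\cdot(a-b)=0$. This is exactly the step that uses $|l_1|=|l_2|\neq|l_3|$ (together with $l_i\neq0$), and it is what distinguishes the present case from the generic one, where only $(I_1,I_2,I_3)$ survive. The remaining two stages are routine, the algebraic identities following from $\re^2+\im^2=|\cdot|^2$ and every bracket reducing to the single current relation for the $T_{ij}$.
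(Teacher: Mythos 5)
Your proof is correct and is essentially the paper's own argument: your explicit enumeration of the resonant degree-two monomials via (\ref{f1pab}) is precisely the step the paper records as ``easy to check,'' and your current relation $\{T_{ij},T_{kl}\}=i\,(\delta_{il}T_{kj}-\delta_{jk}T_{il})$ is the paper's Pauli-matrix computation $\{L_\mu,L_\nu\}=\frac i4\,y^*[\sigma_\mu,\sigma_\nu]\,y$ rewritten in the matrix-unit basis of the same $\mathfrak{u}(2)$. One small bookkeeping remark: excluding the holomorphic cross terms $z_1z_3$, $z_2z_3$ requires $l_1+l_3\neq 0$, which follows from $|l_1|\neq|l_3|$ rather than from $l_1,l_3\neq 0$ alone, but since you invoke $\pm l_1\pm l_3\neq 0$ in the same step your enumeration is complete.
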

\begin{proof}
By using (\ref{f1pab}) it is easy to check that all elements of
${\cal L}_{l}$ can be expressed as linear combinations of elements
of $L=(L_1, \dots, L_5)$. Since the set $L$ is obviously linearly
independent, it forms a basis of ${\cal L}_{l}$. Introducing the
Pauli matrices
\[
\sigma_1=\begin{pmatrix} 0& 1 \\
1&0 \end{pmatrix}\,, \qquad \sigma_2= \begin{pmatrix} 0& -i \\
i&0 \end{pmatrix}\,, \qquad \sigma_3=\begin{pmatrix} 1& 0 \\
0&1 \end{pmatrix}\,,
\]
and defining
\[
\sigma_4= E= \begin{pmatrix} 1& 0 \\
0&1 \end{pmatrix}\,,
\]
we can write
\begin{equation}\label{pauli}
L_\mu= -\frac 12 y^*\sigma_\mu y \,, \qquad \mu=1,2,3,4\,,
\end{equation}
where
\[
y= \begin{pmatrix}z_1 \\
z_2 \end{pmatrix} \,.
\]
From (\ref{pauli}), using relations (\ref{fpb}) one easily obtains
\[
\{L_\mu,L_\nu\}= \frac i 4 y^*[\sigma_\mu, \sigma_\nu]y \ .
\]
Then relations (\ref{pb1})--(\ref{pb2}) can be immediately
verified using the well-known commutation relations between
matrices $\sigma$.
\end{proof}

%\subsubsection{Linear canonical transformations}

Let us consider the subgroups of ${\cal G}_l$ generated by $L_i$,
with $i=1,2,3$. According to (\ref{sub}) we have
\begin{equation*}\begin{split}
\frac {d Y(\tau)}{d\tau} &=\{Y(\tau),L_i\}= -\frac i 2
\sigma_i Y(\tau) \,,\\
\frac {d Z_3(\tau)}{d\tau} &=\{Z_3(\tau),L_i\}= 0 \,,\end{split}
\end{equation*}
where
\[
Y(\tau)= \begin{pmatrix}Z_1(\tau) \\
Z_2(\tau) \end{pmatrix} \,.
\]
Hence
\begin{equation*} \begin{split}
Y(\tau) &= \exp\Big(-\frac i 2 \tau\sigma_i\Big)y \,,\\
Z_3(\tau) &= z_3 \ . \end{split}
\end{equation*}
Similarly, for the two subgroups generated by $L_4$ and $L_5$ we
have respectively
\begin{equation*} \begin{split}
Y(\tau) &= \exp\Big(-\frac i 2 \tau\Big)y \,,\\
Z_3(\tau) &= z_3 \,,\end{split}
\end{equation*}
and
\begin{equation*} \begin{split}
Y(\tau) &= y \,,\\
Z_3(\tau) &= \exp(i\tau)z_3 \ .
\end{split} \end{equation*}

From these formulas one can easily derive the general form of the
elements of ${\cal G}_{l}$.
\begin{prop}
For $l_1= l_2\neq l_3$ the group ${\cal G}_{l}$ is made by all the
transformations of the form
\begin{equation} \label{unit} \begin{split}
Y &= U y \,,\\
Z_3 &= e^{i\phi} z_3 \,, %\label{zed3}
\end{split}
\end{equation}
where $U$ is a unitary $2\times 2$ matrix and $\phi \in
\mathbb{R}$. Hence the group ${\cal G}_{l}$ is isomorphic to $U(2)
\times U(1)$.
\end{prop}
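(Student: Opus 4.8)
The plan is to prove the two inclusions separately, identifying ${\cal G}_l$ with the set of transformations (\ref{unit}). The inclusion ``$\supseteq$'' is essentially already contained in the one-parameter subgroup computations preceding the statement: the flows generated by $L_1,L_2,L_3$ give $Y\mapsto\exp(-\tfrac i2\tau\sigma_i)y$ with $z_3$ fixed, and since $-\tfrac12\sigma_1,-\tfrac12\sigma_2,-\tfrac12\sigma_3$ span $\mathfrak{su}(2)$ while the flow of $L_4$ contributes the central generator $-\tfrac12\sigma_4=-\tfrac12 E$, the products of these subgroups generate all of $U(2)$ acting on $y=(z_1,z_2)^T$; the flow of $L_5$ supplies the independent phase $z_3\mapsto e^{i\tau}z_3$. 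As ${\cal G}_l$ is a group containing all these one-parameter subgroups, it contains the subgroup they generate, i.e.\ all transformations of the form (\ref{unit}).

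For the reverse inclusion ``$\subseteq$'', the key observation is that every $g\in{\cal G}_l$, being symplectic and satisfying $F_1\circ g=F_1$, commutes with the Hamiltonian flow $\phi_t$ of $F_1$. In the complexified coordinates this flow is diagonal, $\phi_t^* z_j=e^{il_j t}z_j$ and $\phi_t^*\bar z_j=e^{-il_j t}\bar z_j$, so on the six-dimensional space spanned by $z_1,z_2,z_3,\bar z_1,\bar z_2,\bar z_3$ its generator carries the weights $\pm l$ (each of multiplicity $2$) and $\pm l_3$ (each of multiplicity $1$), where $l=l_1=l_2$. This is exactly where the hypothesis $|l_1|=|l_2|\neq|l_3|$ enters: since $l,l_3\neq 0$ and $l_3\neq\pm l$, the four weights are pairwise distinct, and a map commuting with a diagonalizable operator preserves each of its eigenspaces.

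Consequently $g$ maps $\mathrm{span}(z_1,z_2)$ into itself and $\mathrm{span}(z_3)$ into itself, the action on the conjugate spaces being fixed by reality, so necessarily $Y=Uy$ with $U\in GL(2,\mathbb{C})$ and $Z_3=\mu z_3$ with $\mu\in\mathbb{C}^*$. It then remains to impose that $g$ preserve the fundamental brackets (\ref{fpb}). The relations $\{z_i,z_j\}=0$ are automatically respected, whereas $\{Z_i,\bar Z_j\}=i(UU^\dagger)_{ij}$ must equal $i\delta_{ij}$, forcing $UU^\dagger=I$, and $\{Z_3,\bar Z_3\}=|\mu|^2\,i$ must equal $i$, forcing $|\mu|=1$, i.e.\ $\mu=e^{i\phi}$. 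This yields precisely (\ref{unit}), and together with the first inclusion gives ${\cal G}_l=\{(U,e^{i\phi})\}\cong U(2)\times U(1)$.

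The main obstacle is the reverse inclusion, specifically the step that pins down the block-diagonal structure of $g$; I expect the flow-commutation/eigenspace argument above to be the cleanest route, since it invokes the frequency hypothesis directly and needs no information about the topology of ${\cal G}_l$. In particular it is essential here that $|l_3|\neq|l|$ and not merely $l_3\neq l$, for if $l_3=-l$ the weight $-l$ would collide with $l_3$ and $g$ could mix $z_3$ with $\bar z_1,\bar z_2$, enlarging the group. An alternative would be purely Lie-theoretic, using Proposition \ref{ll} to note $\dim{\cal L}_l=5=\dim\bigl(U(2)\times U(1)\bigr)$ so that (\ref{unit}) already exhausts the identity component of ${\cal G}_l$; closing that argument, however, would require separately proving that ${\cal G}_l$ has no extra connected components, which is exactly the subtlety the eigenspace argument sidesteps.
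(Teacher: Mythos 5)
Your proof is correct, and on the crucial direction it goes genuinely beyond the paper's own treatment. The paper does not give a formal proof of this proposition at all: its argument consists precisely of the computation you reproduce for the inclusion ``$\supseteq$'' --- integrating the flows of the basis $L_1,\dots,L_5$ of ${\cal L}_l$ from proposition \ref{ll}, observing that these generate the transformations (\ref{unit}) --- followed by the remark that from these formulas one can ``easily derive'' the general form of the elements of ${\cal G}_l$. Taken literally, that establishes only the subgroup generated by the one-parameter flows, i.e.\ the identity component, and leaves implicit why ${\cal G}_l$ contains nothing else. Your second paragraph fills exactly this gap: since any $g\in{\cal G}_l$ commutes with the Hamiltonian flow of $F_1$, it must preserve the weight spaces ${\rm span}(z_1,z_2)$, ${\rm span}(z_3)$ and their conjugates, and the canonical brackets (\ref{fpb}) then force $UU^\dagger=E$ and $|\mu|=1$. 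This buys two things the paper's sketch does not provide: it requires no connectedness assumption on ${\cal G}_l$, and it makes explicit that the hypothesis doing the work is $|l_3|\neq |l_1|$ (the standing assumption of this subsection), not merely $l_3\neq l_1$; your observation that for $l_3=-l_1$ the weights collide and the group enlarges is exactly the reason both proposition \ref{ll} and this proposition need absolute values. The only cost is length --- the paper's route is shorter because it reuses the Lie-algebra computation already done, and for the easy inclusion one could even skip the flows and verify directly that (\ref{unit}) preserves $F_1$ and the brackets --- but as a complete proof the paper's version needs something like your eigenspace argument.
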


%\subsection{Polynomial constants of motion}

Using (\ref{pauli}), we find that under a transformation of the
form (\ref{unit}) the functions (\ref{2gcom}) behave as
\begin{equation} \label{ellei}\begin{split}
L_i&\mapsto -\frac 12 Y^*\sigma_i Y = -\frac 12 y^*U^*\sigma_i U y
= \sum_{j=1}^3 R_{ij}(U)L_j
\qquad \text{for } i=1,2,3 \,, \\
L_4&\mapsto -\frac 12 Y^*\sigma_4 Y = -\frac 12 y^*U^*
\sigma_4 U y = L_4 \,,\\
L_5&\mapsto\bar Z_3 Z_3 = \bar z_3 z_3 =L_5 \,, \end{split}%\label{elle5}
\end{equation}
where $R_{ij}(U)$ are the elements of the $3 \times 3$ matrix
$R(U)\in SO(3)$ which is associated with $U$ by the standard
three-dimensional real representation $R: U(2)\to SO(3)$. The
kernel of this representation is made of all the matrices $U\in
U(2)$ of the form $U=e^{i\psi}E$, with $\psi\in \rn$.

Using (\ref{f1pab}) it is easy to see that any real polynomial in
involution with $F_1$ can be obtained as an algebraic combination
of the elements of a finite set. For $l_1=p>0$, $l_3=q>0$, this
set includes, besides the five second degree polynomials $L_\mu$,
the polynomials ${\rm Re}\{z_1^{q-s}z_2^s\bar z_3^{p}\}$ and ${\rm
Im}\{z_1^{q-s}z_2^s\bar z_3^p\}$ for $0\leq s \leq q$. If $l_1=
p<0$, one can write analogous formulas with $z_3^{-p}$ in place of
$\bar z_3^{p}$. Defining (for $p>0$)
\begin{equation} \label{als}
A_{q,s}=\frac {\bar z_1^{q-s}\bar z_2^s z_3^p} {\sqrt{(q-s)!s!}}
\,,
\end{equation}
we can write for any $a_s,b_s \in \mathbb{R}$
\[
a_s {\rm Re}\{z_1^{q-s}z_2^s\bar z_3^p\} +b_s{\rm
Im}\{z_1^{q-s}z_2^s\bar z_3^p\}= {\rm Re}\{c_s A_{q,s}\} \,,
\]
with
\[
c_s= \sqrt{(q-s)!s!}(a_s+ib_s) \in \cn\ .
\]
Note that
\[
\sum_{s=0}^q A_{q,s}^2 = \frac 1 {q!} \left(\bar z_1^2+ \bar z_2^2
\right)^q z_3^{2p}= \frac R{q!} \,,
\]
where $R$ is the function defined in (\ref{r3}).

The transformation properties of the $A_{q,s}$ under the action of
the group ${\cal G}_{l}$ are determined by their Poisson brackets
with the functions $L_\mu$. A direct calculation provides
\begin{equation} \label{ellemua}
\{iL_\mu,A_{q,s}\}=\sum_{h=0}^q \left(J_{q,\mu}\right)_{hs}
A_{q,h} \,,
\end{equation}
with
\begin{align}
\left(J_{q,1}\right)_{hs} &=\frac 12 \left[ \delta_{h-1,s}
\sqrt{(q-s)h} +
\delta_{s-1,h} \sqrt{(q-h)s} \right] \label{j1} \,,\\
\left(J_{q,2}\right)_{hs} &=\frac 1{2i} \left[ \delta_{h-1,s}
\sqrt{(q-s)h} -
\delta_{s-1,h}\sqrt{(q-h)s} \right] \,,\\
\left(J_{q,3}\right)_{hs} &=\left(\frac q 2 -h \right)\delta_{hs}
\label{j3} \,,\\
\left(J_{q,4}\right)_{hs} &=\frac q 2 \delta_{hs} \,,\\
\left(J_{q,5}\right)_{hs} &=p\delta_{hs} \,. \label{j5}
\end{align}
It follows that
\[
\{iL_\mu,\sum_{s=0}^q c_s A_{q,s}\}=\sum_{s=0}^q c^\prime_s
A_{q,s}\,,
\]
where
\[
c'_s=\sum_{r=0}^q\left(J_{q,\mu}\right)_{sr}c_r \ ,
\]
or in compact notation $c^\prime=J_{q,\mu}\cdot c$. The numerical
coefficients on the right-hand side of (\ref{als}) have been
chosen in such a way that the matrices $J_{q,\mu}$ are hermitian.
Other properties of these matrices, which can be directly verified
using formulas (\ref{j1})--(\ref{j5}), follow from simple general
considerations. Since $F_1 = qL_5 -2pL_4$, the relation
$\{F_1,A_{q,s}\}=0$ implies $q J_{q,5} =2pJ_{q,4}$. Furthermore,
as a consequence of the Jacobi identity we have for any function
$G$
\[
\{L_\mu,\{L_\nu,G\}\}- \{L_\nu,\{L_\mu,G\}\} = \{\{ L_\mu,L_\nu
\},G\} \ .
\]
By applying the previous equation with $G= \sum_{s=0}^q c_s
A_{q,s}$, one can see that the commutation relations between the
matrices $J_{q,\mu}$ are determined by the Poisson brackets
(\ref{pb1})--(\ref{pb2}) between the functions $L_\mu$. We have
accordingly
\begin{equation}
[J_{q,i}, J_{q,j}]=i\sum_{k=1}^3\varepsilon_{ijk} J_{q,k}
\label{cr1}
\end{equation}
for $i,j=1,2,3$, and
\begin{equation}
[J_{q,\mu}, J_{q,4}]=[J_{q,\mu}, J_{q,5}]=0 \label{cr2}
\end{equation}
for $\mu=1,\ldots,5$. Looking at formulas (\ref{j1})--(\ref{j3})
one can actually recognize the well-known $(q+1)$-dimensional
irreducible representation of $su(2)$, which satisfies the
relation
\[
\sum_{i=1}^3 J_{q,i}^2 =\frac{q(q+2)}4 \,E \ ,
\]
$E$ being the identity matrix of rank $q+1$. It follows that the
functions $A_{q,s}$ belong to the corresponding irreducible
representation of $U(2)$.

%\subsubsection{Simple integrable sets}

In the following of the present section, we shall always suppose
that $l_1=l_2= p>0$, $l_3=q >0$, $q\neq p$. However, using the
methods which we already applied in section \ref{closc}, the
results can be easily generalized to all cases in which
$|l_1|=|l_2|\neq |l_3|$.

\begin{thm} \label{teo1}
Any simple integrable set is canonically equivalent to a set $F=
(F_1,F_2,F_3)$ with
\begin{equation}
F_2= d_1 I_1+ d_2 I_2 \ , \label{effe2}
\end{equation}
where $d_1\in \mathbb{N}$ and $d_2\in \mathbb{Z}$ do not have
common divisors, and $|d_2|\leq d_1$.
\end{thm}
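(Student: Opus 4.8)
We want to show any simple integrable set $F=(F_1,F_2,F_3)$, with $F_1=pI_1+pI_2+qI_3$ and $\deg F_2=2$, is canonically equivalent to one whose $F_2$ has the form $d_1I_1+d_2I_2$ with $d_1,d_2$ coprime, $d_1\in\nn$, $|d_2|\le d_1$. Since $F$ is simple, $F_2$ is a real second-degree polynomial in involution with $F_1$, hence $F_2\in\lop_l$. By Proposition \ref{ll} we may write $F_2=\sum_{\mu=1}^5 c_\mu L_\mu$ for real constants $c_\mu$.

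**The plan.** The strategy is to use the group $\gop_l\cong U(2)\times U(1)$ to bring $F_2$ into a normal form, and then read off $d_1,d_2$. By the transformation law \eqref{ellei}, the action of $g\in\gop_l$ sends $(L_1,L_2,L_3)\mapsto R(U)(L_1,L_2,L_3)$ with $R(U)\in SO(3)$, while fixing $L_4$ and $L_5$. So in the expansion $F_2=c_1L_1+c_2L_2+c_3L_3+c_4L_4+c_5L_5$, the vector $(c_1,c_2,c_3)$ is rotated by $SO(3)$ while $(c_4,c_5)$ is untouched. Since $R:U(2)\to SO(3)$ is surjective, I can choose $U$ so that $(c_1,c_2,c_3)$ is rotated to point along the third axis, i.e. to $(0,0,\rho)$ with $\rho=\sqrt{c_1^2+c_2^2+c_3^2}\ge 0$. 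After this transformation $F_2$ becomes $\rho L_3+c_4L_4+c_5L_5$, a combination of $L_3,L_4,L_5$ only. Recalling $L_3=-\tfrac12(I_1-I_2)$, $L_4=-\tfrac12(I_1+I_2)$, $L_5=I_3$, this is a linear combination $\alpha I_1+\beta I_2+\gamma I_3$ of the three commuting $I_j$.

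**From $\alpha I_1+\beta I_2+\gamma I_3$ to the normal form.** The next step is to remove the $I_3$ term and to arrange the coprimality and size conditions. Because $F_1=pI_1+pI_2+qI_3$ is already in the set and integrability is unaffected by replacing $F_2$ with any nonzero linear combination $aF_2+bF_1$ (this is a functional equivalence preserving $F_1$ and degree), I can subtract a suitable multiple of $F_1$ to kill the $I_3$ coefficient, reducing $F_2$ to $d_1'I_1+d_2'I_2$. Rationality: since $F$ is a \emph{simple} set with no functionally equivalent set of lower degree, and $F_2$ commutes with $F_1$ and $F_3$, the ratios of the coefficients must be rational — otherwise the joint level sets and the required functional independence would force a contradiction with the polynomial structure; this lets me clear denominators and divide out the gcd, obtaining coprime integers $d_1,d_2$. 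Finally, if $d_1<0$ I multiply $F_2$ by $-1$; and I use the residual freedom in $\gop_l$ — in particular the $SO(3)$ rotation by angle $\pi$ about the first axis, which sends $L_3\mapsto -L_3$, i.e. swaps the roles of $I_1$ and $I_2$ — together with overall sign to enforce $d_1\in\nn$ and $|d_2|\le d_1$.

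**The main obstacle.** The routine part is the $SO(3)$ reduction, which is immediate from \eqref{ellei} and surjectivity of $R$. I expect the genuine difficulty to lie in justifying the \emph{rationality} of the coefficients $\alpha,\beta,\gamma$ after the rotation: a priori $\rho,c_4,c_5$ are arbitrary reals, so $\alpha I_1+\beta I_2+\gamma I_3$ need not have commensurate coefficients. The key point to argue carefully is that the minimality built into the definition of a simple integrable set — there is no functionally equivalent integrable polynomial set of strictly lower degree — forces $F_2$ (after subtracting the multiple of $F_1$ that kills $I_3$) to be, up to scale, an integer combination of $I_1,I_2$; an irrational ratio would make $F_2$ functionally equivalent to a higher-degree polynomial with no lower-degree integral representative, contradicting either the polynomial nature of the set or the independence requirement. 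Making this commensurability argument airtight, and then verifying that the range reductions $d_1\in\nn$, $|d_2|\le d_1$ are exactly the orbits of the residual symmetry (swap $I_1\leftrightarrow I_2$ and global sign), is where the real work goes.
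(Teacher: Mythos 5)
Your normal-form reduction of $F_2$ is exactly the paper's opening move (expand $F_2=\sum_{\mu=1}^5\alpha_\mu L_\mu$ via Proposition \ref{ll}, kill the $I_3$ part by subtracting a multiple of $F_1$, rotate $(L_1,L_2,L_3)$ using \eqref{ellei} to remove $L_1,L_2$, and use the swap $z_1\leftrightarrow z_2$ to arrange $|\gamma_1|\ge|\gamma_2|$), so that part is fine. But the proof stops exactly where the theorem lives: you never prove the rationality of $\gamma_2/\gamma_1$, you only announce that it should follow from the minimality clause in the definition of a simple set, and that heuristic points in the wrong direction. Rationality has nothing to do with degree minimality or with level-set geometry; it comes from the involution of $F_2$ with the \emph{third} element of the set, used monomial by monomial. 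Write $F_3=\sum_{(a,b)\in K}\re \{c_{a,b}P_{a,b}\}$ in the monomial basis, where $\{F_1,F_3\}=0$ forces the resonance condition \eqref{aibi} on $K$. Each $P_{a,b}$ is an eigenvector of $\{F_2,\cdot\,\}$ with eigenvalue $-i[\gamma_1(a_1-b_1)+\gamma_2(a_2-b_2)]$, so $\{F_2,F_3\}=0$ gives
\[
\gamma_1(a_1-b_1)+\gamma_2(a_2-b_2)=0 \qquad \forall\,(a,b)\in K\,.
\]
If some monomial actually occurring in $F_3$ has $(a_1-b_1,a_2-b_2)\neq(0,0)$, then the real vector $(\gamma_1,\gamma_2)$ is orthogonal to a nonzero integer vector, hence $\gamma_2/\gamma_1\in\mathbb{Q}$, and rescaling $F_2$ (a functional equivalence) yields coprime integers $d_1,d_2$ with the stated normalizations. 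This is the mechanism your sketch is missing, and no amount of appeal to ``simplicity'' of the set replaces it.

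Moreover, your commensurability claim is false as stated, which is why the paper needs a dichotomy rather than a rationality lemma. In the complementary case, where every monomial of $F_3$ has $a_1=b_1$ and $a_2=b_2$ (and then $a_3=b_3$ by \eqref{aibi}), the function $F_3$ is a polynomial in $(I_1,I_2,I_3)$ alone, and \emph{nothing} prevents $\gamma_2/\gamma_1$ from being irrational: $(F_1,\gamma_1 I_1+\gamma_2 I_2,F_3)$ is then a perfectly good simple integrable set. The theorem survives because in this case the three functions are independent functions of $(I_1,I_2,I_3)$, so the set is functionally equivalent to $(F_1,I_1,I_3)$, i.e.\ one takes $d_1=1$, $d_2=0$ in \eqref{effe2}. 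So the conclusion is not that $F_2$ itself must have commensurate coefficients, but that the set is \emph{canonically equivalent} to one whose second element does; without this case split, the argument you outline cannot be completed.
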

\begin{proof}
If $\deg F_2 =2$, according to proposition \ref{ll} we can write
$F_2= \sum_{\mu=1}^5 \alpha_\mu L_\mu$, with $\alpha_\mu \in \rn$.
By replacing $F_2$ with $F_2-(\alpha_5/q)F_1$, we obtain a
functionally equivalent set such that $F_2$ does not contain
$L_5=I_3$. Furthermore, by performing an appropriate
transformation of ${\cal G}_{l}$, according to the first of
(\ref{ellei}) we can operate a rotation on the $L_i$, $i=1,2,3$,
in such a way to eliminate from $F_2$ the terms proportional to
$L_1$ and $L_2$. After these operations we obtain a canonically
equivalent set with
\begin{equation} \label{f2}
F_2=\beta_3 L_3 + \beta_4 L_4 = \gamma_1I_1 +\gamma_2I_2 \ ,
\end{equation}
where $\beta_3=\sqrt{\alpha_1^2+\alpha_2^2+\alpha_3^2}$, $\beta_4=
\alpha_4+ 2p\alpha_5/q$, $\gamma_1 =-(\beta_3 +\beta_4)/2$,
$\gamma_2=(\beta_3-\beta_4)/2$. We further note that, applying
when necessary the canonical transformation that interchanges
$z_1$ and $z_2$, we can always ensure that $|\gamma_1|\geq
|\gamma_2|$.

In general $F_3$ can be written in the form
\[
F_3= \sum_{(a,b)\in K}{\rm Re} \{c_{a,b} P_{a,b}\} \ ,
\]
where $P_{a,b}:= z_1^{a_1}\bar z_1^{b_1} z_2^{a_2}\bar z_2^{b_2}
z_3^{a_3}\bar z_3^{b_3}$, $c_{a,b}\in \cn\ \forall\, (a,b)\in K$,
and $K \subset \mathbb{Z}_+^3 \times \mathbb{Z}_+^3$ is a finite
set such that
\begin{equation}
p(a_1 -b_1 +a_2-b_2) +q(a_3 -b_3)= 0 \quad \forall\, (a,b)\in K\,.
\label{aibi}
\end{equation}
From (\ref{f2}) we have in general
\[
\{F_2, P_{a,b}\} = -i[\gamma_1(a_1-b_1) +\gamma_2(a_2-b_2)]
P_{a,b} \ ,
\]
so that the condition $\{F_2,F_3\}=0$ implies
\begin{equation} \label{f2piabi}
\gamma_1(a_1-b_1)+\gamma_2(a_2-b_2)=0 \quad \forall\, (a,b)\in
K\,.
\end{equation}

If $a_1 -b_1= a_2-b_2 =0$, then from (\ref{aibi}) we obtain
$a_3-b_3=0$ and ${\rm Re} \{c_{a,b} P_{a,b}\}= ({\rm Re}\,c_{a,b})
I_1^{a_1}I_2^{a_2}I_3^{a_3}$. Hence, if $a_1 -b_1= a_2-b_2 =0\
\forall \, (a,b)\in K$, we have clearly $(F_1,F_2,F_3) \approx
(I_1,I_2,I_3) \approx (F_1,I_1,I_3)$, where $\approx$ denotes
functional equivalence. This means that the thesis of the theorem
holds with $d_1=1$, $d_2=0$ in formula (\ref{effe2}). On the other
hand, if there exists $(a,b)\in K$ such that, say, $a_2-b_2\neq
0$, it follows from (\ref{f2piabi}) that $\gamma_2/\gamma_1=
-(a_1-b_1)/(a_2-b_2) \in \mathbb{Q}$. Since one can always
redefine $F_2$ by multiplying it by a constant, the thesis follows
immediately from (\ref{f2}).
\end{proof}

It follows from Theorem \ref{teo1} that any simple integrable set
is canonically equivalent to a set of the form (\ref{fint1}), with
$r=(d_1, d_2, 0)$. In particular, any set of the form
(\ref{fint2}) or (\ref{fint3}) can be converted into the form
(\ref{fint1}) via a symplectic transformation. This can be checked
directly, by observing that under a transformation of the form
(\ref{unit}), with
\[ %\begin{align*}
U= \exp\big(i\frac \pi 4 \sigma_1\big)= \frac {1+i\sigma_1} {\sqrt
2} \,, \qquad \phi= 0 \,,
\] %\end{align*}
we have according to (\ref{ellei}) $P_{12}\mapsto I_1-I_2$.
Similarly, for
\[ %\begin{align*}
U= \exp\big(-i\frac \pi 4 \sigma_2\big)= \frac {1-i\sigma_2}
{\sqrt 2} \,, \qquad \phi= 0 \,,
\] %\end{align*}
we have $Q_{12}\mapsto I_1-I_2$.

When $r=(d_1, d_2, 0)$, the vector $m$ in (\ref{fint1}) must
satisfy the conditions
\begin{align*}
p(m_1+ m_2) +qm_3 &=0\,, \\
d_1 m_1+ d_2 m_2 &=0 \ .
\end{align*}
We see that, if $d_1=d_2$, we can take $m= (-1, 1, 0)$, which
means that $\im R_{m}=2P_{12}$. For $-d_1\leq d_2<d_1$ we can take
instead $m= (-d_2 h, d_1 h, -pk)$, where $h$ and $k$ are two
positive integers without common divisors such that
$q/(d_1-d_2)=h/k$.

We list in Table \ref{tab1} some examples of simple integrable
sets, which are obtained in this way for $l=(1,1,2)$.
\begin{table}[h]
\centering
\begin{tabular}{l|l|c}
$F_2$ & $F_3$ & degree\\
\hline
$I_1$ & Re$\{z_2^2\bar z_3\}$ & 3\\
$I_1-I_2$ & Re$\{z_1z_2\bar z_3\}$ & 3 \\
$3I_1 +I_2$ & Re$\{\bar z_1 z_2^3\bar z_3\}$ & 5 \\
$3I_1-I_2$ & Re$\{z_1z_2^3\bar z_3^2\}$ & 6 \\
$2I_1+I_2$ & Re$\{\bar z_1^2z_2^4\bar z_3\}$ & 7 \\
$5I_1+I_2$ & Re$\{\bar z_1z_2^5\bar z_3^2\}$ & 8 \\
$5I_1+3I_2$ & Re$\{\bar z_1^3z_2^5\bar z_3\}$ & 9 \\
$5I_1-I_2$ & Re$\{z_1z_2^5\bar z_3^3\}$ & 9
\end{tabular}
\caption{\label{tab1} Examples of simple integrable sets for $l=
(1,1,2)$.}
\end{table}

\subsection{A non-simple integrable algebra}\label{nons}

In the preceding section we have shown how one can construct
simple integrable sets of arbitrarily high degree. On the other
hand, we do not know of any general method to obtain non-simple
integrable sets, that is integrable sets $F$, with $2< \deg
F_2\leq \deg F_3$, such that there exists no functionally
equivalent set $F'$ with $\deg F_2'=2$. Nevertheless, we are able
to exhibit a concrete example of non-simple integrable set for
$l=(1,1,2)$. Let us define
\begin{align*}
D_0&={\rm Re}\{\bar z_1^2 z_3\}=\sqrt{2}{\rm Re}\{A_{2,0}\}\,,
&& C_0={\rm Im}\{\bar z_1^2 z_3\}=\sqrt{2}{\rm Im}\{A_{2,0}\}\,, \\
D_1&={\rm Re}\{\bar z_1\bar z_2 z_3\}={\rm Re}\{A_{2,1}\}\,, &&
C_1={\rm Im}\{\bar z_1\bar z_2 z_3\}={\rm Im}\{A_{2,1}\}\,, \\
D_2&={\rm Re}\{\bar z_2^2 z_3\}=\sqrt{2}{\rm Re}\{A_{2,2}\}\,, &&
C_2={\rm Im}\{\bar z_2^2 z_3\}=\sqrt{2}{\rm Im}\{A_{2,2}\}\,,
\end{align*}
and consider the functions
\begin{equation} \label{nonsimple}
F_1=I_1 +I_2 +2I_3 \,,\qquad F_2=C_0+2C_2 \,,\qquad
F_3=2C_0^2+I_1M_3^2\,,
\end{equation}
where $M_3:= P_{12}=  2{\rm Im}\{\bar z_1 z_2\}$. Clearly
$\{F_1,F_2\}=\{F_1,F_3\}=0$. Furthermore, by exploiting formulas
(\ref{ellemua})--(\ref{j5}), or by direct computation, one easily
finds
\begin{align}
&\{M_3,C_0\}=-2C_1\,,
&&\{M_3,C_2\}=2C_1 \,,\\
&\{I_1,C_0\}= 2D_0\,, &&\{I_1,C_2\}= 0 \,.\label{extlie}
\end{align}
We also have
\[
\{C_0,C_2\}=\frac i 4(\bar z_1^2 z_2^2-z_1^2 \bar z_2^2) = -\frac
{M_3 N_3}4 \ ,
\]
where $N_3:= Q_{12}=2{\rm Re}\{\bar z_1 z_2\}$. Using these
relations one finds
\[
\{F_2,F_3\}= 2M_3(C_0 N_3 -D_0 M_3- 2I_1C_1)= 0 \ ,
\]
where the last equality follows from
\begin{equation*} \begin{split}
C_0 N_3-D_0 M_3&=2{\rm Im}\{\bar z_1^2 z_3 \} {\rm Re}\{z_1 \bar
z_2\} + 2{\rm Re}\{\bar z_1^2 z_3\} {\rm Im}\{
z_1 \bar z_2\} \\
&=2{\rm Im}\{z_1 \bar z_1^2 \bar z_2  z_3\}=2I_1C_1 \ .
\end{split} \end{equation*}

It is easy to see that the set $F=(F_1, F_2, F_3)$ is functionally
independent. Therefore $F$ is an integrable set with $\deg F_2=3$,
$\deg F_3 =6$.  This implies that any system with hamiltonian
$H=f(F)$, where $f$ is an arbitrary function of three variables,
is also integrable. It is also easy to obtain a similar integrable
set for any other $l$ such that $|l_1|= |l_2|=1$, $|l_3|=2$. By
means of symplectic transformations of the form (\ref{unit}), we
can then obtain a whole class of non-simple integrable sets which
are canonically equivalent to $F$. However, we do not know at
present of any other class of non-simple integrable sets, either
for $l=(\pm 1, \pm 1, \pm 2)$ or for any other value of $l$.
Attempts to discover other examples, also with the aid of
specially made computer programs, have proven unsuccessful.

\subsection{Quantization} Let us now consider the problem of the
quantization of the integrable sets that we have obtained in
sections \ref{case} and \ref{nons}. According to proposition %\ref{poiss}
3.1 (case 1) of \cite{part2}, all simple integrable sets can be
straightforwardly quantized by symmetrization with respect to the
operators $(\hat z, \hat z^*)$ which were defined in section
\ref{qosc}. Therefore any quantum system with hamiltonian operator
$\hat H=f(\hat F)$, where $\hat F= F^{\rm sym}$ is the
symmetrization of a simple integrable set and $f$ is an arbitrary
function of three variables, is quasi-integrable.

As regards the non-simple algebra (\ref{nonsimple}), let $F_i^{\rm
sym}$, $i=1,2,3$, be the operators which are obtained by
symmetrization of the functions $F_i$ with respect to the
operators $(\hat z, \hat z^*)$. Using proposition %\ref{appl1}
4.1 (case a) of \cite{part2} we obtain that
\[
[F_1^{\rm sym},F_2^{\rm sym}] =[F_1^{\rm sym}, F_3^{\rm sym}]=0
\,.
\]
However this proposition %\ref{appl1}
does not allow us to evaluate $[F_2^{\rm sym}, F_3^{\rm sym}]$,
since both the involved operators have degree higher than 2. Let
us then make use of the general formula of Moyal brackets.
%(see remark \ref{moyal}).
With the conventions adopted in this section,
this formula for two generic polynomials $H(z, \bar z)$ and $F(z,
\bar z)$ can be written as $[H^{\rm sym}, F^{\rm sym}]=G^{\rm
sym}$, where
\[
G= \sum_{k\in \nn}\ \sum_{|\alpha+ \beta|=2k+1}\frac
{(-1)^{|\beta|}} {2^{2k}\alpha ! \beta !}\,\frac
{\partial^{|\alpha + \beta|} H}{\partial z^\alpha
\partial \bar z^\beta}\, \frac{\partial^{|\alpha + \beta|}F} {\partial
z^\beta \partial \bar z^\alpha} \ .
\]
By applying this formula we obtain
\[
[F_2^{\rm sym},F_3^{\rm sym}] =\frac 52 i\hat D_0\,,
\]
where $\hat D_0$ is the standard quantization of the function
$D_0$ (symmetrization is in this case unnecessary). On the other
hand, we have from (\ref{extlie}) and from proposition %\ref{poiss}
3.1 of \cite{part2} that
\[
[F_2^{\rm sym}, \hat I_1] =-i\{F_2,I_1\}^{\rm sym} =2i\hat D_0\, .
\]
Since $[F_1^{\rm sym},\hat I_1] =0$, it follows from the two above
equalities that the three operators
\[
\hat F_1 =F_1^{\rm sym}\,, \qquad \hat F_2 =F_2^{\rm sym}\,,
\qquad \hat F_3 =F_3^{\rm sym}-\frac 54 \hat I_1
\]
are pairwise in involution. Since these operators are also
quasi-independent, one concludes that the set $\hat F= (\hat F_1,
\hat F_2, \hat F_3)$ is a quasi-integrable set of operators.
Hence, any quantum system with hamiltonian operator $\hat H=f(\hat
F)$, where $f$ is an arbitrary function of three variables, is
quasi-integrable. We have thus shown that it is possible to
quantize our example (\ref{nonsimple}) of non-simple integrable
set. However, to this purpose the general procedure of
quantization by symmetrization needs to be modified. The
modification is represented by the introduction of the term
$-(5/4) \hat I_1$ in the expression of the operator $\hat F_3$.
This fact presents an analogy with the situation for a free
quantum rigid body in 6-dimensional space \cite{part3}, and also
with some results that have already been obtained for other types
of integrable quantum systems \cite{hieta, hietagr}.


\begin{thebibliography}{99}

\bibitem{part1} M. Marino and N. N. Nekhoroshev, Quantization of
classical integrable systems. Part I: quasi-integrable quantum
systems, arXiv:1001.4685 [math-ph] (2010).

\bibitem{part2} M. Marino and N. N. Nekhoroshev, Quantization of
classical integrable systems. Part II: quantization of functions
on Poisson manifolds, arXiv:1001.4701 [math-ph] (2010).

\bibitem{part3} M. Marino and N. N. Nekhoroshev, Quantization of
classical integrable systems. Part III: systems in $n$-dimensional
Euclidean space, arXiv:1001.4885 [math-ph] (2010).

\bibitem{TMMO} N. N. Nekhoroshev, Action-angle variables and their
generalizations, {\it Trans. Moscow Math. Soc.} {\bf 26}, 180--198
(1972).

\bibitem{NSZhilin} N. N. Nekhoroshev, D. A. Sadovskii and
B. I. Zhilinskii, Fractional monodromy of resonant classical and
quantum oscillators, {\it C. R. Acad. Sci. Paris},  S\'er. I, {\bf
335}, 985--988 (2002).

\bibitem{NSZ} N. N. Nekhoroshev, D. A. Sadovskii and
B. I. Zhilinskii, Fractional Hamiltonian monodromy, {\it Ann.
Henri Poincar\'e} {\bf 7}, 1099--1211  (2006).

\bibitem{drobrez} N. N. Nekhoroshev, Fractional monodromy in the case
of arbitrary resonances, {\it Sbornik: Mathematics} {\bf 198},
383--424 (2007).

\bibitem{bamb} D. Bambusi and N. N. Nekhoroshev,
A property of exponential stability in nonlinear wave equations
near the fundamental linear mode, {\it Physica D} {\bf 122},
73--104 (1998).

\bibitem{Trud}  N. N. Nekhoroshev, Strong stability of the approximate
fundamental mode of the nonlinear string equation, {\it Trans.
Moscow Math. Soc.} {\bf 63}, 151--217 (2002).

\bibitem{hieta} J. Hietarinta, Classical versus quantum
integrability, {\it J. Math. Phys.} {\bf 25}, 1833--1840 (1984).

\bibitem{hietagr} J. Hietarinta and B. Grammaticos, On the $\hbar^2$
correction terms in quantum integrability, {\it J. Phys. A, Math.
Gen.} {\bf 22}, 1315--1322 (1989).

\end{thebibliography}
\end{document}